\begin{document}
%%%%%%%%%%%%%%%%%%
\def\id{\mathbbm I}
\def\set#1{{\cal #1}}\def\sH{\set{H}}
\def\sa{\fontfamily{roman}\selectfont}
\def\Tr{\hbox{\rm Tr}}
\newcommand{\media}[1]{\langle #1 \rangle}
\def\sT{{\scriptscriptstyle T}}
\def\kket#1{|#1\rangle\!\rangle}
\def\bbra#1{\langle\!\langle #1|}
\newcommand{\smz}{{\scriptscriptstyle Z}}
\newcommand{\smx}{{\scriptscriptstyle X}}
\newcommand{\smy}{{\scriptscriptstyle Y}}
\newcommand{\sms}{{\scriptscriptstyle S}}
\newcommand{\smp}{{\scriptscriptstyle P}}
\newcommand{\smt}{{\scriptscriptstyle T}}
\newcommand{\smn}{{\scriptscriptstyle N}}
\newcommand{\smb}{{\scriptscriptstyle B}}
\newcommand{\smk}{{\scriptscriptstyle K}}
\newcommand{\sma}{{\scriptscriptstyle A}}
\newcommand{\smc}{{\scriptscriptstyle C}}
\newcommand{\smab}{{\scriptscriptstyle \!AB}}
\newcommand{\smac}{{\scriptscriptstyle \!AC}}
\newcommand{\cE}{{\cal E}}
\newcommand{\sme}{{\scriptscriptstyle E}}
\newcommand{\smq}{{\scriptscriptstyle Q}}
\spnewtheorem{postulate}{Postulate}{\bf}{\rm}
%%%%%%%%%%%%%%%%%%
%\title{Modern tools for the description of quantum states, measurements,
%and operations}
\title{The modern tools of quantum mechanics}
\subtitle{A tutorial on quantum states, measurements,
and operations}
\author{Matteo G A Paris\inst{1,2,3}}
\institute{Dipartimento di Fisica dell'Universit\`a degli Studi di
Milano, I-20133 Milano, Italia, EU 
\and 
CNISM - Udr Milano, I-20133 Milano, Italia, EU.
\and
\email{matteo.paris@fisica.unimi.it}}
%%%%%%%%%%%%%%%%%%
\abstract{We address the basic postulates of quantum mechanics and point out
that they are formulated for a closed isolated system. Since we are
mostly dealing with systems that interact or have interacted with the
rest of the universe one may wonder whether a suitable modification is
needed, or in order. This is indeed the case and this tutorial is
devoted to review the modern tools of quantum mechanics, which are
suitable to describe states, measurements, and operations of realistic,
not isolated, systems.  We underline the central role of the Born rule
and and illustrate how the notion of density operator naturally emerges,
together with the concept of purification of a mixed state. In
reexamining the postulates of standard quantum measurement theory, we
investigate how they may be formally generalized, going beyond the
description in terms of selfadjoint operators and projective
measurements, and how this leads to the introduction of generalized
measurements, probability operator-valued measures (POVMs) and detection
operators. We then state and prove the Naimark theorem, which elucidates
the connections between generalized and standard measurements  and
illustrates how a generalized measurement may be physically implemented.
The "impossibility" of a joint measurement of two non commuting
observables is revisited and its canonical implementation as a
generalized measurement is described in some details.  The notion of
generalized measurement is also used to point out the heuristic nature
of the so-called Heisenberg principle.  Finally, we address the basic
properties, usually captured by the request of unitarity, that a map
transforming quantum states into quantum states should satisfy to be
physically admissible, and introduce the notion of complete positivity
(CP). We then state and prove the Stinespring/Kraus-Choi-Sudarshan
dilation theorem and elucidate the connections between the CP-maps
description of quantum operations, together with their operator-sum
representation, and the customary unitary description of quantum
evolution. We also address transposition as an example of positive map
which is not completely positive, and provide some examples of
generalized measurements and quantum operations.
}
\maketitle
%%%%%%%%%%%%%%%%%%
\newpage
\tableofcontents
%%%%%%%%%%%%%%%%%%
\section{Introduction}
Quantum information science is a novel discipline which addresses how
quantum systems may be exploited to improve the processing,
transmission, and storage of information.  This field has fostered new
experiments and novel views on the conceptual foundations of quantum
mechanics, and also inspired much current research on coherent quantum
phenomena, with quantum optical systems playing a prominent role. Yet,
the development of quantum information had so far little impact on the
way that quantum mechanics is taught, both at graduate and undergraduate
levels.  This tutorial is devoted to review the mathematical tools of
quantum mechanics and to present a modern reformulation of the basic
postulates which is suitable to describe quantum systems in interaction
with their environment, and with any kind of measuring and processing
devices.
\par
We use Dirac braket notation throughout the tutorial  and by {\em system} 
we refer to a single given degree 
of freedom (spin, position, angular momentum,...) of a physical entity. 
Strictly speaking we are going to deal with systems 
described by finite-dimensional Hilbert spaces and with observable quantities
having a discrete spectrum. Some of the results may be generalized to the
infinite-dimensional case and to the continuous spectrum.
\par
The postulates of quantum mechanics
are a list of prescriptions to summarize 
\begin{itemize}
\item[1.] how we describe the states of a physical system;
\item[2.] how we describe the measurements performed on a physical
system;
\item[3.] how we describe the evolution of a physical system, either
because of the dynamics or due to a measurement. 
\end{itemize}
In this section we
present a pico\-review of the ba\-sic po\-stu\-lates of quantum mechanics in
or\-der to intro\-duce notation and point out both i) the implicit
assumptions contained in the standard formulation, and ii) the need of 
a reformulation in terms of more general
mathematical objects. For our purposes 
the postulates of quantum mechanics may be grouped and summarized
as follows
\begin{postulate}[States of a quantum system]\label{pqs}
The possible states of a physical system correspond to normalized vectors 
$|\psi\rangle$,  $\langle \psi|\psi\rangle=1$,  of a Hilbert space 
$H$. Composite systems, either made by more than one physical object 
or by the different degrees of freedom of the same entity, are described 
by tensor product $H_1 \otimes H_2 \otimes ...$ of the corresponding 
Hilbert spaces, and the overall state of the system is a vector in the 
global space. As far as the Hilbert space description of physical
systems is adopted,  then we have the {\em superposition principle}, 
which says that if $|\psi_1\rangle$ and $|\psi_2\rangle$ 
are possible states of a system, then also any (normalized) linear 
combination $\alpha|\psi_1\rangle+\beta|\psi_2\rangle$, $\alpha,\beta\in
{\mathbbm C}$, $|\alpha|^2+|\beta|^2=1$ of the two states is an admissible 
state of the system.
\end{postulate}
\begin{postulate}[Quantum measurements]\label{pqm}
Observable quantities are described by Hermitian operators $X$. Any 
hermitian operator $X=X^\dag$, admits a spectral decomposition 
$X=\sum_x x P_x$, in terms of its real eigenvalues $x$, which are
the possible value of the observable, and of the projectors 
$P_x=|x\rangle\langle x|$, $P_x,P_{x'}=\delta_{xx'}P_x$ on its
eigenvectors $X|x\rangle=x|x\rangle$, which form a basis for the
Hilbert space, i.e. a complete set of orthonormal states with the 
properties $\langle x|x'\rangle = \delta_{xx'}$ (orthonormality), 
and $\sum_x |x\rangle\langle x| =\id$ (completeness, we omitted to
indicate the dimension of the Hilbert space).  
The probability of obtaining the outcome $x$ from the measurement of 
the observable  $X$ is given by $p_x =\left|\langle\psi|
x\rangle\right|^2$, i.e 
\begin{align}
p_x 
= \langle\psi| P_x| \psi \rangle
=\sum_n \langle \psi| \varphi_n\rangle\langle \varphi_n |P_x|\psi\rangle
%\notag \\
=\sum_n 
\langle \varphi_n |P_x|\psi\rangle
\langle \psi| \varphi_n\rangle
=
\hbox{Tr}\left[|\psi\rangle\langle\psi|\, P_x\right]\,,
\end{align}
and the overall expectation value by
$$\langle X\rangle = \langle\psi| X|\psi\rangle =
\hbox{Tr}\left[|\psi\rangle\langle\psi|\, X \right]\,.$$
This is the {\em Born rule}, which represents the fundamental recipe to connect the
mathematical description of a quantum state to the prediction of quantum 
theory about the results of an actual experiment. 
The state of the system {\em after} 
the measurement is the (normalized) projection of the state {\em before} 
the measurement on  the eigenspace of the observed eigenvalue, i.e.
$$
|\psi_x\rangle = \frac{1}{\sqrt{p_x}}\, P_x |\psi\rangle\,.
$$
\end{postulate}
\begin{postulate}[Dynamics of a quantum system]\label{pqd}
The dynamical evolution of a physical system is described by 
unitary operators: if $|\psi_0\rangle$ is the state of the system 
at time $t_0$ then the state of the system at time $t$ is given by 
$|\psi_t\rangle = U(t,t_0) |\psi_0\rangle$, with 
$U(t,t_0)U^\dag(t,t_0)=U^\dag(t,t_0) U(t,t_0)=\id$. 
\end{postulate}
We will denote by $L(H)$ the linear space of (linear) operators from $H$ to $H$,
which itself is a Hilbert space with scalar product provided by the
trace operation, i.e. upon denoting by $|A\rangle\rangle$ operators seen
as elements of $L(H)$, we have $\langle\langle A| B\rangle\rangle =
\hbox{Tr}[A^\dag B]$ (see Appendix \ref{apTR} for details on the trace
operation). 
\par
As it is apparent from their formulation, the postulates of quantum 
mechanics, as reported above, are about a closed isolated system. 
On the other hand, we are mostly dealing 
with system that interacts or have interacted with the rest of the 
universe, either during their dynamical evolution, or when subjected to
a measurement. As a consequence, one may wonder whether a suitable
modification is needed, or in order. This is indeed the case and 
the rest of his tutorial is devoted to review the tools 
of quantum mechanics and to present a modern reformulation of 
the basic postulates which
is suitable to describe, design and control quantum systems 
in interaction with their environment, and with any kind of measuring 
and processing devices.
%%%%%%
\section{Quantum states}
\subsection{Density operator and partial trace}
Suppose to have a quantum system whose preparation is not completely
un\-der con\-trol. What we know is that the system is prepared in the
state $|\psi_k\rangle$ with probability $p_k$, i.e. that the system
is described by the statistical ensemble $\{p_k, |\psi_k\rangle\}$,
$\sum_k p_k=1$, where the states $\{|\psi_k\rangle\}$ are not, in general, 
orthogonal. The expected value of an observable $X$ may be evaluated 
as follows
\begin{align}\notag
\langle X\rangle & = \sum_k p_k \langle X \rangle_k = \sum_k p_k
\langle\psi_k | X| \psi_k \rangle = \sum_{n\,p\,k} p_k
\langle\psi_k|\varphi_n\rangle\langle\varphi_n |
X|\varphi_p\rangle\langle\varphi_p| \psi_k \rangle 
\\ \notag &=
\sum_{n\,p\,k} p_k
\langle\varphi_p| \psi_k \rangle\langle\psi_k|\varphi_n\rangle
\langle\varphi_n |X|\varphi_p\rangle
= \sum_{n\,p} 
\langle\varphi_p| \varrho| \varphi_n\rangle
\langle\varphi_n |X|\varphi_p\rangle
\\ \notag 
&=\sum_{p} 
\langle\varphi_p| \varrho\,X|\varphi_p\rangle
= \hbox{Tr}\left[\varrho\,X\right]\,
\end{align}
where $$\varrho = \sum_k p_k\,|\psi_k\rangle\langle\psi_k |$$
is the {\em statistical (density) operator} describing the system under investigation.
The $|\varphi_n\rangle$'s in the above formula are a basis for the
Hilbert space, and we used the trick of suitably inserting two resolutions of the
identity $\id = \sum_n |\varphi_n\rangle\langle\varphi_n|$.   
The formula is of course trivial if the $|\psi_k\rangle$'s are themselves 
a basis or a subset of a basis.  
\begin{theorem}[Density operator]
An operator $\varrho$ is the density operator 
associated to an ensemble $\{p_k, |\psi_k\rangle\}$ is and only if it 
is a positive $\varrho\geq 0$ (hence selfadjoint) operator with unit 
trace $\Tr\left[\varrho\right]=1$. 
\end{theorem}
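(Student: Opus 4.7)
The plan is to prove the two implications separately, both being essentially direct consequences of the spectral theorem and elementary properties of the trace.

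For the forward direction, assume $\varrho=\sum_k p_k |\psi_k\rangle\langle\psi_k|$ with $p_k\geq 0$, $\sum_k p_k=1$, and each $|\psi_k\rangle$ normalized. Self-adjointness is immediate since $(|\psi_k\rangle\langle\psi_k|)^\dag=|\psi_k\rangle\langle\psi_k|$ and the weights $p_k$ are real. Positivity follows from evaluating the quadratic form on an arbitrary $|\phi\rangle$: one obtains $\langle\phi|\varrho|\phi\rangle=\sum_k p_k|\langle\phi|\psi_k\rangle|^2\geq 0$. Unit trace is obtained by computing in any orthonormal basis and using linearity of $\Tr$, together with $\Tr[|\psi_k\rangle\langle\psi_k|]=\langle\psi_k|\psi_k\rangle=1$, so that $\Tr[\varrho]=\sum_k p_k=1$.

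For the reverse direction, assume $\varrho\geq 0$ and $\Tr[\varrho]=1$. On a complex Hilbert space, positivity of the quadratic form already forces self-adjointness, so the spectral theorem applies and furnishes a decomposition $\varrho=\sum_k \lambda_k |e_k\rangle\langle e_k|$ with $\{|e_k\rangle\}$ an orthonormal eigenbasis and $\lambda_k\in\mathbb{R}$ the eigenvalues. Positivity of $\varrho$ then gives $\lambda_k=\langle e_k|\varrho|e_k\rangle\geq 0$, while unit trace gives $\sum_k\lambda_k=1$. Hence $\{\lambda_k,|e_k\rangle\}$ is a legitimate statistical ensemble whose associated density operator, by the very formula used in the forward direction, is precisely $\varrho$.

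No step is a serious obstacle here; the only point worth flagging is that the ensemble realizing a given $\varrho$ is by no means unique (the spectral one is canonical, but infinitely many non-orthogonal decompositions exist as well). Since the theorem only asserts the \emph{existence} of an ensemble realizing $\varrho$, this non-uniqueness is harmless, though it is conceptually important and deserves an accompanying remark in the text.
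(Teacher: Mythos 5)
Your proof is correct and follows essentially the same route as the paper's: the forward direction via the quadratic form and linearity of the trace, the converse via diagonalization and reading off the spectral ensemble. The remark on non-uniqueness of the realizing ensemble is also made in the text immediately after the theorem, so nothing is missing.
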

\begin{proof}:
If $\varrho=\sum_k p_k |\psi_k\rangle\langle\psi_k |$
is a density operator then $\hbox{Tr}[\varrho]=\sum_k p_k=1$ and
for any vector $|\varphi\rangle \in H$,
$\langle\varphi|\varrho|\varphi\rangle = \sum_k p_k
|\langle\varphi|\psi_k\rangle|^2 \geq 0$. Viceversa, if 
$\varrho$ is a positive operator with unit trace than it can be
diagonalized and the sum of eigenvalues is equal to one. Thus it can be
naturally associated to an ensemble. \qed
\end{proof}
As it is true for any operator, the density operator may be expressed in terms
of its matrix elements in a given basis, i.e. $\varrho=\sum_{np}
\varrho_{np} |\varphi_n\rangle\langle\varphi_p|$ where
$\varrho_{np}=\langle\varphi_n|\varrho|\varphi_p\rangle$ is usually
referred to as the {\em density matrix} of the system.
Of course, the density matrix of a state is diagonal if we use
a basis which coincides or includes the set of eigenvectors of the
density operator, otherwise it contains off-diagonal elements. 
\par
Different ensembles may lead to the same density operator. In this case 
they have the same expectation values for any operator and thus are physically
indistinguishable. In other words, 
different ensembles leading to the same density operator
are actually the same state, i.e. the density operator provides the natural 
and most fundamental quantum description of physical systems.
How this reconciles with Postulate \ref{pqs} dictating that 
{\em physical systems are described by vectors in a Hilbert space}? 
\par  
In order to see how it works let us first notice that, according to the 
postulates reported above, the action of "measuring nothing" should be described by 
the identity operator $\id$. Indeed the identity it is Hermitian and has the 
single eigenvalues $1$, corresponding to the persistent result of measuring 
nothing. Besides, the eigenprojector corresponding to the eigenvalue $1$ is 
the projector over the whole Hilbert space and thus we have the
consistent prediction that the state after the "measurement" is
left unchanged. Let us now consider a situation in which a bipartite system
prepared in the state $|\psi_\smab\rangle\rangle \in H_{\sma} \otimes H_{\smb}$
is subjected to the measurement of an observable $X=\sum_x P_x \in L(H_\sma)
$, $P_x=|x\rangle\langle x|$ i.e. a measurement involving only the degree 
of freedom described by the Hilbert space $H_\sma$. The overall observable
measured on the global system is thus $\boldsymbol{X}=X\otimes \id_\smb$, with 
spectral decomposition  $\boldsymbol{X}= \sum_x x\, \boldsymbol{Q}_x$, 
$\boldsymbol{Q}_x=P_x\otimes \id_\smb$. The probability distribution 
of the outcomes is then obtained using the Born rule, i.e.
\begin{align}
p_x = \hbox{Tr}_\smab
\Big[|\psi_\smab\rangle\rangle\langle\langle\psi_\smab |\,
P_x \otimes \id_\smb\Big]\,. \label{b1} 
\end{align}
On the other hand, since the measurement has been performed on the sole 
system $A$, one expects the Born rule to be valid also at the level of 
the single system $A$, and a question arises on the form of the object 
$\varrho_\sma$ which allows one to write  
$p_x = \hbox{Tr}_\sma \left[\varrho_\sma\, P_x\right]$
i.e. the Born rule as a trace only over the Hilbert space $H_\sma$. 
Upon inspecting Eq. (\ref{b1}) one sees that a suitable  mapping  
$|\psi_\smab\rangle\rangle\langle\langle\psi_\smab | \rightarrow \varrho_\sma$ is
provided by the partial trace 
$\varrho_\sma=\hbox{Tr}_\smb\big[|\psi_\smab\rangle\rangle\langle
\langle\psi_\smab |\big]$. Indeed, for the operator $\varrho_\sma$ 
defined as the partial trace, we have $\hbox{Tr}_\sma[\varrho_\sma]=
\hbox{Tr}_\smab\left[|\psi_\smab\rangle\rangle\langle
\langle\psi_\smab |\right]=1$ and, for any vector $|\varphi\rangle\in
H_\sma$ , 
$\langle\varphi_\sma|\varrho_\sma|\varphi_\sma\rangle = \hbox{Tr}_\smab
\left[|\psi_\smab\rangle\rangle\langle
\langle\psi_\smab |\,
|\varphi_\sma\rangle\langle\varphi_\sma|\otimes
\id_\smb\right] \geq 0$. Being a positive, unit trace, operator
$\varrho_\sma$ is itself a density operator according to Theorem 1. 
As a matter of fact, the partial trace is the unique
operation which allows to maintain the Born rule at both levels, i.e. the
unique operation leading to the correct description of observable
quantities for subsystems of a composite system. Let us state this as a
little theorem \cite{nie00} 
\begin{theorem}[Partial trace] 
The unique mapping 
$|\psi_\smab\rangle\rangle\langle\langle\psi_\smab | \rightarrow \varrho_\sma =
f(\psi_\smab)$ from $H_\sma \otimes H_\smb$ to $H_\sma$ for which  
$\Tr_\smab
\left[|\psi_\smab\rangle\rangle\langle\langle\psi_\smab |\,
P_x \otimes \id_\smb\right] = \Tr_\sma \left[f(\psi_\smab)\,
P_x\right]$ is the partial trace $f(\psi_\smab)\equiv \varrho_\sma = 
\Tr_\smb\left[|\psi_\smab\rangle\rangle\langle
\langle\psi_\smab |\right]$. 
\end{theorem}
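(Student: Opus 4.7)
My plan is as follows. Existence has essentially been demonstrated in the discussion preceding the theorem, where the candidate $\varrho_\sma = \Tr_\smb[|\psi_\smab\rangle\rangle\langle\langle\psi_\smab|]$ is shown to satisfy the required identity by a straightforward manipulation of a trace over a factorized basis of $H_\sma \otimes H_\smb$. So the substantive task is to establish uniqueness.

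To prove uniqueness I would fix $|\psi_\smab\rangle\rangle$, suppose $f$ and $g$ are two mappings from $H_\sma \otimes H_\smb$ to $L(H_\sma)$ that both satisfy the defining identity, and consider the operator $\Delta = f(\psi_\smab) - g(\psi_\smab)$. Subtracting the two instances of the hypothesis gives $\Tr_\sma[\Delta \, P_x] = 0$ for every rank-one projector $P_x = |x\rangle\langle x|$ on $H_\sma$, since any unit vector $|x\rangle \in H_\sma$ is an eigenvector in the spectral decomposition of some Hermitian operator (for instance, the rank-one projector onto $|x\rangle$ itself). Equivalently, $\langle x | \Delta | x \rangle = 0$ for every $|x\rangle \in H_\sma$.

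The final step is to promote this to $\Delta = 0$ by a standard polarization argument. Evaluating the previous identity on $|x\rangle + |y\rangle$ and on $|x\rangle + i|y\rangle$, and combining the two results, yields $\langle x | \Delta | y \rangle = 0$ for arbitrary $|x\rangle, |y\rangle \in H_\sma$, whence every matrix element of $\Delta$ vanishes and $\Delta = 0$. This forces $f(\psi_\smab) = g(\psi_\smab) = \Tr_\smb[|\psi_\smab\rangle\rangle\langle\langle\psi_\smab|]$ pointwise, which is the desired uniqueness.

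The main subtlety, and the point I would flag as requiring care, is the implicit freedom to range over \emph{all} rank-one projectors when imposing the hypothesis, rather than merely those in some fixed orthonormal basis. Without this richness of ``test projectors'' the conclusion would fail, because agreement of diagonal matrix elements in a single basis constrains only the diagonal of $\Delta$ and leaves the off-diagonal entries undetermined. Once one invokes the hypothesis for every admissible observable $X \in L(H_\sma)$, hence every one-dimensional projector on $H_\sma$, the remainder of the argument is mechanical.
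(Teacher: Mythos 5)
Your proof is correct, and it reaches the paper's conclusion by a genuinely different (if closely related) mechanism. The paper treats $L(H_\sma)$ as a Hilbert space with scalar product $\langle\langle A|B\rangle\rangle=\Tr[A^\dag B]$, expands $f(\psi_\smab)$ on an operator basis $\{M_k\}$, observes that the Born-rule condition fixes every coefficient $\Tr_\sma[M_k^\dag f(\psi_\smab)]$ to equal $\Tr_\smab\left[M_k^\dag\otimes\id_\smb\,|\psi_\smab\rangle\rangle\langle\langle\psi_\smab|\right]$, and concludes by uniqueness of the decomposition on a basis. You instead form the difference $\Delta$ of two candidate maps, use the hypothesis only for rank-one projectors to get $\langle x|\Delta|x\rangle=0$ for every $|x\rangle\in H_\sma$, and polarize to kill the off-diagonal matrix elements. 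Both arguments rest on the same underlying fact --- an operator on $H_\sma$ is determined by its Hilbert--Schmidt pairings with a spanning subset of $L(H_\sma)$ --- but yours is the more elementary route and, notably, the more faithful to the hypothesis as literally stated: the theorem only supplies the trace identity for spectral projectors $P_x$, while the paper's proof silently extends it to arbitrary operators $M_k^\dag$. That extension is legitimate precisely because rank-one projectors span $L(H_\sma)$ (by the very polarization identity you invoke, together with linearity of the trace condition in $P_x$), but the paper does not say so; your version makes this step explicit. Your closing remark --- that testing only against the projectors of a single fixed orthonormal basis would constrain just the diagonal of $\Delta$ --- correctly isolates the point at which both proofs genuinely need the full family of test projectors.
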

\begin{proof} Basically the proof reduces to the fact that the 
set of operators on $H_\sma$ is itself a Hilbert space $L(H_\sma)$ 
with scalar product given by $\langle\langle A| B\rangle\rangle = 
\hbox{Tr}[A^\dag B]$. If we consider a basis of operators
$\{M_k\}$ for $L(H_\sma)$ and expand $f(\psi_\smab) =\sum_k M_k
\hbox{Tr}_\sma[M_k^\dag f(\psi_\smab)]$, then since the map $f$ has to
preserve the Born rule, we have 
$$
f(\psi_\smab) =\sum_k M_k
\hbox{Tr}_\sma[M_k^\dag\, f(\psi_\smab)]
= \sum_k M_k
\hbox{Tr}_\smab\left[M_k^\dag\otimes\id_\smb\,
|\psi_\smab\rangle\rangle\langle\langle\psi_\smab |\right]\,
$$
and the thesis follows from the fact that in a Hilbert space the
decomposition on a basis is unique. \qed
\end{proof}
The above result can be easily generalized to the case of a system 
which is initially described by a density operator $\varrho_\smab$, 
and thus we conclude that when we focus attention to a subsystem of 
a composite larger system the unique mathematical description of 
the act of ignoring part of the degrees of freedom is provided by the partial trace. 
It remains to be proved that the partial trace of a density operator
is a density operator too. This is a very consequence of the definition
that we put in the form of another little theorem. 
\begin{theorem} The partial traces 
$\varrho_\sma = \Tr_\smb[\varrho_\smab]$, 
$\varrho_\smb = \Tr_\sma[\varrho_\smab]$ 
of a density operator $\varrho_\smab$ of a bipartite system, 
are themselves density operators for the reduced systems.
\end{theorem}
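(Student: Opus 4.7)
The plan is to reduce the claim to Theorem 1: since $\varrho_\sma$ (and, by symmetry, $\varrho_\smb$) is already an operator on the appropriate subsystem Hilbert space, it qualifies as a density operator as soon as it is positive and has unit trace, self-adjointness following automatically from positivity. So the whole proof reduces to checking these two conditions starting from the definition $\varrho_\sma=\Tr_\smb[\varrho_\smab]$, using only the assumed properties $\varrho_\smab\geq 0$ and $\Tr_\smab[\varrho_\smab]=1$.

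For the trace condition, I would pick orthonormal bases $\{|m\rangle_\sma\}$ of $H_\sma$ and $\{|n\rangle_\smb\}$ of $H_\smb$, form the product basis of $H_\smab$, and read off the iterated-trace identity $\Tr_\sma\Tr_\smb=\Tr_\smab$. Applied to $\varrho_\smab$ this immediately yields $\Tr_\sma[\varrho_\sma]=\Tr_\smab[\varrho_\smab]=1$.

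For positivity, I would fix an arbitrary $|\varphi\rangle\in H_\sma$ and expand the partial trace over $\{|n\rangle_\smb\}$ to obtain
\begin{equation*}
\langle\varphi|\varrho_\sma|\varphi\rangle = \sum_n \bigl(\langle\varphi|\otimes\langle n|\bigr)\,\varrho_\smab\,\bigl(|\varphi\rangle\otimes|n\rangle\bigr).
\end{equation*}
Each summand has the form $\langle\Psi|\varrho_\smab|\Psi\rangle$ with $|\Psi\rangle=|\varphi\rangle\otimes|n\rangle\in H_\smab$, and is therefore non-negative because $\varrho_\smab\geq 0$. Hence $\varrho_\sma\geq 0$, and Theorem 1 then identifies it as a density operator. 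The argument for $\varrho_\smb$ is identical with the roles of the two subsystems exchanged.

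The main (minor) point to get right is precisely this positivity step: one has to recognize that non-negativity of $\langle\varphi|\varrho_\sma|\varphi\rangle$ is inherited term by term from the positivity of $\varrho_\smab$ once the partial trace is written as a sum over product vectors of the form $|\varphi\rangle\otimes|n\rangle$. This is the same observation already used, in the paragraph preceding Theorem 2, to argue that the partial trace of a pure bipartite state is a positive operator, so beyond this the proof is a purely mechanical verification.
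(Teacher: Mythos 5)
Your proof is correct and follows essentially the same route as the paper: unit trace via the iterated-trace identity, and positivity of $\langle\varphi|\varrho_\sma|\varphi\rangle$ reduced to the positivity of $\varrho_\smab$ on product vectors. Your explicit expansion $\langle\varphi|\varrho_\sma|\varphi\rangle=\sum_n\langle\varphi,n|\varrho_\smab|\varphi,n\rangle$ is just the term-by-term justification of the inequality the paper writes compactly as $\Tr_\smab[\varrho_\smab\,|\varphi\rangle\langle\varphi|\otimes\id_\smb]\geq 0$.
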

\begin{proof} We have 
$\hbox{Tr}_\sma [\varrho_\sma] = \hbox{Tr}_\smb [\varrho_\smb] =
\hbox{Tr}_\smab[\varrho_\smab]=1$ 
and, for any state 
$|\varphi_\sma\rangle\in H_\sma$, $|\varphi_\smb\rangle\in H_\smb$,
\begin{align}
\langle\varphi_\sma|\varrho_\sma|\varphi_\sma\rangle &= \hbox{Tr}_\smab
\left[\varrho_\smab\, |\varphi_\sma\rangle\langle\varphi_\sma|\otimes
\id_\smb\right] \geq 0 \notag \\
\langle\varphi_\smb|\varrho_\smb|\varphi_\smb\rangle &= \hbox{Tr}_\smab
\left[\varrho_\smab\,\id_\sma \otimes
|\varphi_\smb\rangle\langle\varphi_\smb|
\right] \geq 0\,. \notag \quad \qed
\end{align} \end{proof}
%%%
\subsubsection{Conditional states}
From the above results it also follows that when we perform a measurement on one of the two
subsystems, the state of the "unmeasured" subsystem after 
the observation of a specific outcome may be obtained as the partial
trace of the overall post measurement state, i.e. the  projection of the 
state before the measurement on the  eigenspace of the observed eigenvalue,
in formula
\begin{align}
\varrho_{\smb x} = \frac{1}{p_x} \hbox{Tr}_\sma\left[ P_x\otimes\id_\smb
\,\varrho_\smab\, P_x\otimes\id_\smb \right]
= \frac{1}{p_x} \hbox{Tr}_\sma\left[\varrho_\smab\, P_x\otimes\id_\smb
\right]\,\label{conditional}
\end{align}
where, in order to write the second equality, we made use of the circularity of
the trace (see Appendix \ref{apTR}) and of the fact that we are dealing with 
a factorized projector. The state $\varrho_{\smb x}$ will be also referred to 
as the "conditional state" of
system $B$ after the observation of the outcome $x$ from a measurement 
of the observable $X$ performed on the system $A$.
\begin{exercise}
Consider a bidimensional system (say the spin state of
a spin $\frac12$ particle) and find two ensembles corresponding to the 
same density operator.
\end{exercise}
\begin{exercise}
Consider a spin $\frac12$ system and the ensemble
$\{p_k,|\psi_k\}$, $k=0,1$, $p_0=p_1=\frac12$,
$|\psi_0\rangle=|0\rangle$, $|\psi_1\rangle=|1\rangle$, where $|k\rangle$
are the eigenstates of $\sigma_3$. Write the density matrix in the basis
made of the eigenstates of $\sigma_3$ and then in the basis of $\sigma_1$. Then, do the same but
for the ensemble obtained from the previous one by
changing the probabilities to $p_0=\frac14$, $p_1=\frac34$.
\end{exercise}
%%%
\begin{exercise}
Write down the partial traces of the state
$|\psi\rangle\rangle=\cos\phi\, |00\rangle\rangle + \sin\phi\,
|11\rangle\rangle$, 
where we used the notation $|jk\rangle\rangle=|j\rangle\otimes
|k\rangle$.
\end{exercise}
%%%%%%%%%%%%%
\subsection{Purity and purification of a mixed state}
As we have seen in the previous section when we observe a portion,
say $A$, of a composite system described by the vector
$|\psi_\smab\rangle\rangle\in
H_\sma \otimes H_\smb$, the mathematical object to be inserted in the
Born rule in order to have the correct description of observable
quantities is the partial trace, which individuates a density operator on
$H_\sma$. Actually, also the converse is true, i.e. any density operator
on a given Hilbert space may be viewed as the partial trace of a state
vector on a larger Hilbert space. Let us prove this constructively: 
if $\varrho$ is a density operator on $H$, then it can be diagonalized by
its eigenvectors and it can be written as $\varrho=\sum_k \lambda_k
|\psi_k\rangle\langle\psi_k|$; then we introduce another Hilbert space 
$K$, with dimension at least equal to the number of nonzero eqigenvalues
of $\varrho$ and a basis
$\{|\theta_k\rangle\}$ in $K$, and consider the vector
$|\varphi\rangle\rangle \in H \otimes K$ given by
$|\varphi\rangle\rangle=\sum_k \sqrt{\lambda_k}\, |\psi_k\rangle \otimes
|\theta_k\rangle$. Upon tracing over the Hilbert space $K$, we have
$$
\hbox{Tr}_\smk \left[|\varphi\rangle\rangle\langle\langle\varphi|\right] = 
\sum_{kk^\prime} \sqrt{\lambda_k\lambda_{k^\prime}}\,
|\psi_k\rangle\langle\psi_{k^\prime}|\,
\langle\theta_{k^\prime}|\theta_k\rangle= \sum_k \lambda_k\,
|\psi_k\rangle\langle\psi_k| = \varrho\:.
$$
Any vector on a larger Hilbert space which satisfies the above condition 
is referred to as a {\em purification} of the given density operator. Notice
that, as it is apparent from the proof, there exist infinite purifications 
of a density operator. Overall, putting together this fact with the conclusions
from the previous section, we are led to reformulate the first postulate 
to say that  {\em quantum states of a physical system are described
by density operators}, i.e. positive operators with unit trace 
on the Hilbert space of the system.
\par
A suitable measure to quantify how far a density operator is from a
projector is the so-called {\em purity}, which is defined
as the trace of the square density operator $\mu[\varrho]=
\hbox{Tr}[\varrho^2]=\sum_k \lambda_k^2$, where the $\lambda_k$'s are 
the eigenvalues of $\varrho$. Density operators made by a projector
$\varrho=|\psi\rangle\langle\psi|$ have $\mu=1$ and are referred to
as {\em pure states}, whereas for any $\mu<1$ we have a
{\em mixed state}. Purity of a state ranges in the interval
$1/d \leq \mu \leq 1$ where $d$ is the dimension of the Hilbert space.
The lower bound is found looking for the minimum of $\mu=\sum_k \lambda_k^2$
with the constraint $\sum_k \lambda_k=1$, and amounts to minimize
the function $F=\mu+\gamma\sum_k \lambda_k$, $\gamma$ being a Lagrange
multipliers. The solution is $\lambda_k=1/d$, $\forall k$, i.e.
the {maximally mixed} state $\varrho=\id/d$, and the
corresponding purity is $\mu=1/d$. 
\par
When a system is prepared in a pure state we have 
the maximum possible information on the system according 
to quantum mechanics. On the other hand, for mixed states 
the degree of purity is connected with the amount of information 
we are missing by looking at the system only, while ignoring the 
{\em environment}, i.e. the rest of the universe. In fact, 
by looking at a portion of a composite system we are ignoring
the information encoded in the correlations between the portion under
investigation and
the rest of system: This results in a smaller amount of information 
about the state of the subsystem itself. In order to emphasize this
aspect, i.e. the existence of residual ignorance about the system, 
the degree of mixedness 
may be quantified also by the Von Neumann (VN) entropy 
$S[\varrho]=-\hbox{Tr}\left[\varrho\,\log\varrho\right]=-\sum_n
\lambda_n \log\lambda_n$, where $\{\lambda_n\}$ are the eigenvalues of 
$\varrho$. We have $0\leq S[\varrho] \leq \log d$: for a pure state 
$S[|\psi\rangle\langle\psi|]=0$ whereas $S[\id/d]=\log d$ for a 
maximally mixed state. VN entropy is a monotone function of the purity,
and viceversa.
\begin{exercise}
Evaluate purity and VN entropy of the 
partial traces of the state $|\psi\rangle\rangle=\cos\phi\, 
|01\rangle\rangle + \sin\phi\, |10\rangle\rangle$.
\end{exercise}
\begin{exercise}
Prove that for any pure bipartite state the entropies
of the partial traces are equal, though the two density
operators need not to be equal.
\end{exercise}
\begin{exercise}
Take a single-qubit state with density operator expressed in terms of
the Pauli matrices $\varrho=\frac12 (\id + r_1 \sigma_1 + r_2 \sigma_2+ r_3
\sigma_3)$ (Bloch sphere representation), $r_k=\Tr[\varrho\, \sigma_k]$, and prove that the Bloch
vector $(r_1,r_2,r_3)$ should satisfies $r_1^2+r_2^2+r_3^3\leq 1$ 
for $\varrho$ to be a density
operator.
\end{exercise}
%%%%%%%%%%%
\section{Quantum measurements}
In this section we put the postulates of standard quantum measurement 
theory under closer scrutiny. We start with some formal considerations 
and end up with a reformulation suitable for the description of any
measurement performed on a quantum system, including those involving
external systems or a noisy environment \cite{Per93,Bergou}. \par 
Let us start by reviewing the
postulate of standard quantum measurement theory in a
pedantic way, i.e. by expanding Postulate \ref{pqm};   
$\varrho$ denotes the state of the system before the
measurement.
\begin{itemize}
\item[{\bf [2.1]}] Any observable quantity is associated to a
Hermitian operator $X$ with spectral decomposition $X=\sum_x
\,x\,|x\rangle\langle x|$. The eigenvalues are real and we assume for
simplicity that they are nondegenerate. A measurement of $X$ yields 
one of the eigenvalues $x$ as possible outcomes.
\item[{\bf [2.2]}] The eigenvectors of $X$ form a basis
for the Hilbert space. The projectors $P_x=|x\rangle\langle x|$ span
the entire Hilbert space, $\sum_x P_x=\id$.
\item[{\bf [2.3]}] The projectors $P_x$ are orthogonal $P_xP_{x^\prime} =
\delta_{xx^\prime}P_x$. It follows that $P_x^2=P_x$ and thus that the
eigenvalues of any projector are $0$ and $1$.
\item[{\bf [2.4]}] { (Born rule)} The probability that a particular outcome
is found as the measurement result is 
$$p_x=  \hbox{Tr}\left[P_x\varrho P_x\right] = \hbox{Tr}\left[\varrho
P_x^2\right] \stackrel{\bigstar}{=} \hbox{Tr}\left[\varrho P_x\right]\,.$$
\item[{\bf [2.5]}] { (Reduction rule)} The state after the measurement (reduction
rule or projection postulate) is 
$$\varrho_x = \frac1{p_x}\,P_x\varrho P_x\,,$$ if the outcome is $x$.
\item[{\bf [2.6]}] If we perform a measurement but we 
do not record the results, the post-measurement state is given by 
$\widetilde{\varrho}=\sum_x p_x\,\varrho_x = \sum_x P_x\varrho P_x$.
\end{itemize}
The formulations {\bf [2.4]} and ${\bf [2.5]}$ follow from the
formulations 
for pure states, upon invoking the existence of a
purification:
\begin{align} 
p_x &= \hbox{Tr}_\smab \left[P_x \otimes \id_\smb\,
|\psi_\smab\rangle\rangle\langle\langle\psi_\smab |\,P_x \otimes
\id_\smb \right]=
\hbox{Tr}_\smab \left[
|\psi_\smab\rangle\rangle\langle\langle\psi_\smab |\,P_x^2 \otimes
\id_\smb \right] \notag \\ &=\hbox{Tr}_\sma\left[\varrho_\sma
P_x^2\right] \,
\end{align}
%%%%%%%%%%%%
\begin{align} \varrho_{\sma x} &= 
\frac1{p_x} \hbox{Tr}_\smb \left[
P_x \otimes\id_\smb\,
|\psi_\smab\rangle\rangle\langle\langle\psi_\smab |\,P_x \otimes
\id_\smb\right]=
\frac1{p_x}P_x \,
 \hbox{Tr}_\smb \left[
|\psi_\smab\rangle\rangle\langle\langle\psi_\smab |\right] P_x
\notag \\ &=\frac1{p_x}P_x \, \varrho_\sma\,P_x\,.
\end{align}
The message conveyed by these postulates is that we can only 
predict the spectrum of the possible outcomes and the 
probability that a given outcome is obtained. On the other hand, 
the measurement process is random, and we cannot predict the actual 
outcome of each run. Independently on its purity, a density operator $\varrho$ does
not describe the state of a single system, but rather an ensemble of identically
prepared systems. If we perform the same measurement on each member
of the ensemble we can predict the possible results and the probability
with which they occur but we cannot predict the result of individual
measurement (except when the probability of a certain outcome is either
$0$ or $1$).
%%%%%%%%%%%%%%%%%%
\subsection{Probability operator-valued measure and detection operators}
The set of postulates {\bf [2.*]} may be seen as a set of recipes
to generate probabilities and post-measurement states. 
We also notice that the number of
possible outcomes is limited by the number of terms in the orthogonal 
resolution of identity, which itself cannot be larger than the dimensionality
of the Hilbert space. It would however be often desirable to have more
outcomes than the dimension of the Hilbert space while keeping
positivity and normalization of probability distributions. In this
section will show that this is formally possible, upon relaxing the
assumptions on the mathematical objects describing the measurement,
and replacing them with more flexible ones, still obtaining a meaningful
prescription to generate probabilities. Then, in the next sections we
will show that there are physical processes that fit with this 
generalized description, and that actually no revision of the postulates
is needed, provided that the degrees of freedom of the measurement
apparatus are taken into account.
\par
The Born rule is a prescription to generate probabilities: its textbook 
form is the right term of the starred equality in ${\bf [2.4]}$.
However, the form on the left term has the merit to underline that in order to 
generate a probability it sufficient if the $P_x^2$ is a positive
operator. In fact, we do not need to require that the set of the $P_x$'s 
are projectors, nor we need the
positivity of the underlying $P_x$ operators. So, let us consider the 
following generalization: we introduce a set of positive operators 
$\Pi_x\geq 0$, which are the generalization of the $P_x$ and use the 
prescription $p_x=\hbox{Tr}[\varrho\,\Pi_x]$ to generate probabilities.
Of course, we want to ensure that this is a true probability 
distribution, i.e. normalized, and therefore require that 
$\sum_x \Pi_x= \id$, that is the positive operators 
still represent a resolution of the identity, as the set of projectors over 
the eigenstates of a selfadjoint operator. We will call a decomposition of 
the identity in terms of positive operators $\sum_x \Pi_x=\id$ a 
{\em probability operator-valued measure} (POVM) 
and $\Pi_x\geq0$ the elements of the POVM.
\par
Let us denote the operators giving the post-measurement states 
(as in ${\bf [2.5]}$) by $M_x$. We refer to them as to the {\em detection
operators}. As noted above, they  are no longer constrained to be 
projectors. Actually, they may be any operator 
with the constraint, imposed by ${\bf [2.4]}$ i.e.
$p_x = \hbox{Tr}[M_x\varrho\, M_x^\dag] = \hbox{Tr}[\varrho\, \Pi_x]$.
This tells us that the POVM elements have the form $\Pi_x=M_x^\dag M_x$ 
which, by construction, individuate a set of 
a positive operators. There is a residual
freedom in designing the post-measurement state. In fact, since 
$\Pi_x$ is a positive operator $M_x=\sqrt{\Pi_x}$ exists and satisfies
the constraint, as well as any operator of the form  
$M_x=U_x\,\sqrt{\Pi_x}$ with $U_x$ unitary. This is the most general 
form of the detection operators satisfying the constraint 
$\Pi_x=M_x^\dag M_x$ and corresponds to their polar decomposition.
The POVM elements determine the absolute values leaving the freedom of 
choosing the unitary part. 
\par
Overall, the detection operators $M_x$ represent a generalization of the
projectors $P_x$, while the POVM elements $\Pi_x$ generalize $P_x^2$.
The postulates for quantum measurements may be reformulated as follows
\begin{itemize}
\item[{\bf [II.1]}]  Observable quantities are associated to POVMs, 
i.e. decompositions of identity $\sum_x \Pi_x = \id$ in terms
of positive $\Pi_x\geq 0$ operators. The possible outcomes $x$ label
the elements of the POVM and the construction may be generalized to the
continuous spectrum.
\item[{\bf [II.2]}] The elements of a POVM are positive
operators expressible as $\Pi_x=M^\dag_x\,M_x$ where the detection
operators $M_x$ are generic operators with the only constraint 
$\sum_x M^\dag_x\,M_x = \id$.
\item[{\bf [II.3]}] (Born rule) The probability that a particular outcome
is found as the measurement result is
$p_x=  \hbox{Tr}\left[M_x\varrho M_x^\dag\right] = \hbox{Tr}\left[\varrho
M_x^\dag M_x\right] = \hbox{Tr}\left[\varrho \Pi_x\right]$.
\item[{\bf [II.4]}] (Reduction rule) The state after the measurement
is 
$\varrho_x = \frac1{p_x}\,M_x\varrho M_x^\dag$ if the outcome is $x$.
\item[{\bf [II.5]}] If we perform a measurement but we 
do not record the results, the post-measurement state is given by 
$\widetilde{\varrho}=\sum_x p_x\,\varrho_x = \sum_x M_x\varrho M_x^\dag$.
\end{itemize}
Since orthogonality is no longer a requirement, the number of elements 
of a POVM has no restrictions and so the number of possible outcomes
from the measurement. The above formulation generalizes both the 
Born rule and the reduction rule, and says that any set of
detection operators satisfying ${\bf [II.2]}$ corresponds to 
a legitimate operations leading to a proper probability distribution 
and to a set of post-measurement states. This scheme is referred to as 
a {\em generalized measurement}. Notice that in ${\bf [II.4]}$
we assume a reduction mechanism sending pure states into pure states.
This may be further generalized to reduction mechanism where pure
states are transformed to mixtures, but we are not going to deal with
this point.
\par
Of course, up to this point, this is just a formal mathematical 
generalization of the standard description of measurements given 
in  textbook quantum mechanics, and few questions naturally arise:  
Do generalized measurements describe physically realizable measurements?  
How they can be implemented? And if this is the case, does it 
means that standard formulation is too restrictive or wrong? To all these 
questions an answer will be provided by the following sections where 
we state and prove the Naimark Theorem, and discuss few examples of
measurements described by POVMs.
%%%%%%%%%%%%
\subsection{The Naimark theorem}
The Naimark theorem basically says that any generalized measurement
satisfying {\bf [II.*]} may be viewed as a standard measurement 
defined by {\bf [2.*]} in a larger Hilbert space, and 
conversely, any standard measurement involving more than one physical
system may be described as a generalized measurement on one of the
subsystems. In other words,
if we focus attention on a portion of a composite system
where a standard measurement takes place, than the statistics of the
outcomes and the post-measurement states of the subsystem may be 
obtained with the tools of generalized measurements. Overall, we
have
\begin{theorem}[Naimark] For any given {\rm POVM} $\sum_x \Pi_x = \id$,
$\Pi_x\geq 0$ on a Hilbert space $H_\sma$ there exists a Hilbert space 
$H_\smb$, a state $\varrho_\smb=|\omega_\smb\rangle\langle\omega_\smb|
\in L(H_\smb)$, a unitary operation $U\in L(H_\sma \otimes H_\smb)$,
$UU^\dag=U^\dag U=\id$, and a projective measurement $P_x$,
$P_xP_x^\prime=\delta_{xx^\prime} P_x$ on $H_\smb$ such that 
$\Pi_x=\Tr_\smb [\id \otimes \varrho_\smb\, U^\dag \id \otimes
P_x\,U]$. The setup is referred to as a {\em Naimark extension} of
the {\rm POVM}. Conversely, any measurement scheme where the system is 
coupled to another system, from now on referred to as the  ancilla, and 
after evolution, a projective measurement 
is performed on the ancilla may be seen as the Naimark extension of a
{\rm POVM}, i.e. one may write the Born rule
$p_x=\Tr[\varrho_\sma\,\Pi_x]$ and the reduction rule
$\varrho_\sma \rightarrow \varrho_{\sma x}=\frac1{p_x}M_x\varrho_\sma
M_x^\dag$
at the level of the system only, in terms of the POVM elements
$\Pi_x=\Tr_\smb [\id \otimes \varrho_\smb\, U^\dag \id \otimes
P_x\,U]$ and the detection operators $M_x|\varphi_\sma\rangle = \langle
x|U|\varphi_\sma,\omega_\smb\rangle\rangle$.
\end{theorem}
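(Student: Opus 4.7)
The plan is to prove both implications by an explicit Naimark-style dilation of the detection operators.

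For the forward direction, I would start from the POVM $\{\Pi_x\}_{x=1}^n$ on $H_\sma$ and pick detection operators $M_x=U_x\sqrt{\Pi_x}$ (polar decomposition, as derived above), so that $\sum_x M_x^\dag M_x=\id_\sma$. I would then introduce an ancilla space $H_\smb$ of dimension at least $n$, an orthonormal set $\{|x\rangle\}\subset H_\smb$ indexed by the outcomes (completed to an ONB of $H_\smb$), and a fixed reference unit vector $|\omega_\smb\rangle\in H_\smb$. The key object is the linear map $V:H_\sma\to H_\sma\otimes H_\smb$ defined by
\[
V|\varphi_\sma\rangle = \sum_x M_x|\varphi_\sma\rangle\otimes|x\rangle.
\]
The constraint $\sum_x M_x^\dag M_x=\id_\sma$ reads exactly as $V^\dag V=\id_\sma$, so $V$ is an isometry. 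I would then extend $V$ to a unitary $U$ on $H_\sma\otimes H_\smb$ by imposing $U(|\varphi_\sma\rangle\otimes|\omega_\smb\rangle):=V|\varphi_\sma\rangle$ for every $|\varphi_\sma\rangle$, and completing arbitrarily on the orthogonal complement. With the rank-one projectors $P_x=|x\rangle\langle x|$ on the ancilla, a direct computation using the cyclic property of the trace and the defining action of $U$ on $H_\sma\otimes|\omega_\smb\rangle$ yields
\[
\Tr_\smb\bigl[\id_\sma\otimes|\omega_\smb\rangle\langle\omega_\smb|\,U^\dag\,\id_\sma\otimes P_x\,U\bigr]=M_x^\dag M_x=\Pi_x,
\]
which is the required dilation identity.

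For the converse, I would read the same construction backwards. Given $|\omega_\smb\rangle$, $U$ and rank-one projectors $P_x=|x\rangle\langle x|$ on the ancilla, define $M_x|\varphi_\sma\rangle:=\langle x|U|\varphi_\sma,\omega_\smb\rangle\rangle\in H_\sma$. Unitarity of $U$ together with $\sum_x P_x=\id_\smb$ then yields $\sum_x M_x^\dag M_x=\id_\sma$, so $\{\Pi_x=M_x^\dag M_x\}$ is a legitimate POVM; applying the projective Born and reduction rules of ${\bf [2.*]}$ on $H_\sma\otimes H_\smb$ and partial-tracing over $B$ reproduces ${\bf [II.3]}$-${\bf [II.4]}$ at the level of $A$ alone. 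Higher-rank ancilla projectors $P_x=\sum_k|x,k\rangle\langle x,k|$ are handled by refining into rank-one pieces and summing over $k$ at the end.

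The main technical step is the extension of the isometry $V$ to a unitary $U$. Since $V$ is fixed only on the distinguished subspace $H_\sma\otimes|\omega_\smb\rangle$ of dimension $\dim H_\sma$ sitting inside the full space $H_\sma\otimes H_\smb$ of dimension $\dim H_\sma\cdot\dim H_\smb$, the orthogonal complements on both sides have matching dimension $\dim H_\sma\cdot(\dim H_\smb-1)$, and any orthonormal-basis completion produces a valid $U$. This step is standard but worth flagging, as it exposes both the freedom in choosing a Naimark extension (different completions give different but equally valid dilations of the same POVM) and the necessary sizing condition $\dim H_\smb\geq n$ on the ancilla.
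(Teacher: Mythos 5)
Your proposal is correct and follows essentially the same route as the paper: the dilation is built from the detection operators via $U\,|\varphi_\sma\rangle\otimes|\omega_\smb\rangle=\sum_x M_x|\varphi_\sma\rangle\otimes|x\rangle$, the isometry property is exactly $\sum_x M_x^\dag M_x=\id$, and the converse reads the same construction backwards with $M_x|\varphi_\sma\rangle=\langle x|U|\varphi_\sma,\omega_\smb\rangle\rangle$. Your explicit dimension count for extending the isometry $V$ to a unitary is a welcome tightening of a step the paper only asserts, but it does not change the substance of the argument.
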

Let us start with the second part of the theorem, and look 
at what happens when we couple the system 
under investigation to an additional system, usually referred to as ancilla
(or apparatus), let them evolve, and then perform a projective
measurement on the ancilla. This kind of setup is schematically depicted
in Figure 1. 
%%%%%%%%%%
\begin{figure}[h!]
\centerline{\includegraphics[width=0.85\textwidth]{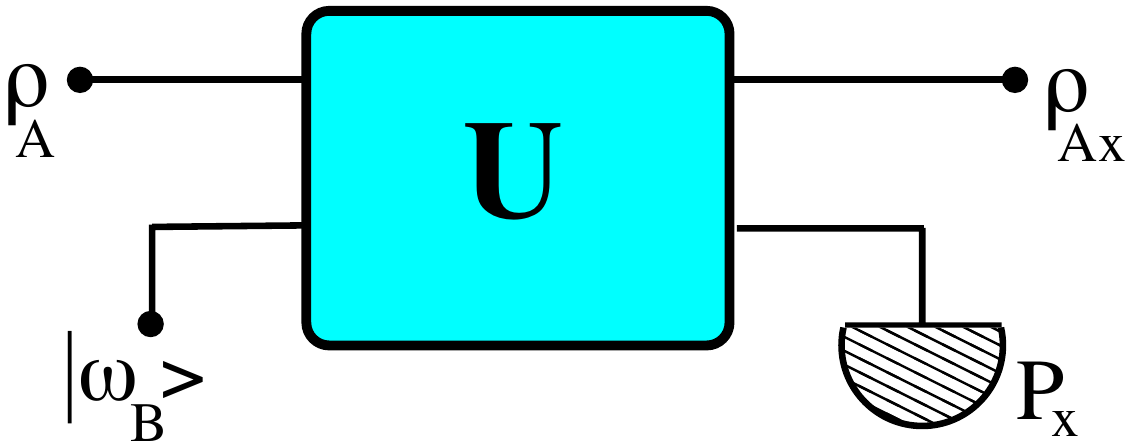}}
\caption{Schematic diagram of a generalized measurement.
The system of interest is coupled to an ancilla prepared in a known
state $|\omega_\smb\rangle$ by the unitary evolution $U$, and then 
a projective measurement is performed on the ancilla.}
\end{figure}
\par
%%%%%%%%%%
The Hilbert space of the overall system is $H_\sma\otimes
H_\smb$, and we assume that the system and the ancilla are initially
independent on each other, i.e. the global initial preparation is
$R=\varrho_\sma\otimes
\varrho_\smb$. We also assume that the ancilla is prepared in the pure state
$\varrho_\smb=|\omega_\smb\rangle\langle\omega_\smb|$ since this 
is always possible, upon a suitable purification of the ancilla degrees of
freedom, i.e. by suitably enlarging the ancilla Hilbert space.  Our aim
it to obtain information about the system by measuring an observable $X$ on 
the ancilla. This is done after the system-ancilla  interaction 
described by the unitary operation $U$.
According to the Born rule the probability of the outcomes is given by
$$
p_x = \hbox{Tr}_\smab \left[U \varrho_\sma\otimes\varrho_\smb U^\dag\,
\id\otimes |x\rangle\langle x|\right] = 
\hbox{Tr}_\sma\left[\varrho_\sma\,\underbrace{
\hbox{Tr}_\smb\left[
\id \otimes \varrho_\smb\, 
U^\dag\,\id\otimes |x\rangle\langle x|U\right]}
\right]
\vspace{-2mm}
$$
\begin{flushright}{\mbox{\large${\;\Pi_x}$}$\qquad\qquad
\qquad\qquad$}\end{flushright}
where the set of operators $\Pi_x=\hbox{Tr}_\smb\left[
\id \otimes \varrho_\smb\, U^\dag\,\id\otimes |x\rangle\langle x|U\right]
= \langle\omega_\smb| U^\dag \id \otimes P_x U | \omega_\smb\rangle $ 
is the object that would permit to
write the Born rule at the level of the subsystem $A$, i.e. it is our
candidate POVM.  
\par
In order to prove this, let us define
the operators $M_x\in L(H_\sma)$ by their action on the generic vector
in $H_\sma$
$$ 
M_x |\varphi_\sma\rangle = \langle x| U 
| \varphi_\sma,\omega_\smb\rangle\rangle
$$
where 
$| \varphi_\sma,\omega_\smb\rangle\rangle = | \varphi_\sma\rangle\otimes
|\omega_\smb\rangle$ and the $|x\rangle$'s are the orthogonal 
eigenvectors of $X$. Using the decomposition of $\varrho_\sma=\sum_k
\lambda_k |\psi_k\rangle\langle\psi_k |$ onto its eigenvectors the 
probability of the outcomes can be rewritten as
\begin{align}\notag
p_x &= \hbox{Tr}_\smab \left[U \varrho_\sma\otimes\varrho_\smb U^\dag\,
\id\otimes |x\rangle\langle x|\right] = \sum_k \lambda_k \hbox{Tr}_\smab
\left[U
|\psi_k,\omega_\smb\rangle\rangle\langle\langle\omega_\smb,\psi_k |
U^\dag\, \id\otimes |x\rangle\langle x| \right]\\ \notag
& = 
\sum_k \lambda_k \hbox{Tr}_\sma
\left[\langle x|U
|\psi_k,\omega_\smb\rangle\rangle\langle\langle\omega_\smb,\psi_k |
U^\dag|x\rangle\right]
= \sum_k \lambda_k \hbox{Tr}_\sma
\left[M_x|\psi_k\rangle\langle\psi_k |
M_x^\dag\right] \\ 
 &= \hbox{Tr}_\sma \left[M_x \varrho_\sma M_x^\dag\right]
= \hbox{Tr}_\sma \left[\varrho_\sma\, M_x^\dag M_x\right]\,,
\label{bigstar1} 
\end{align}
which shows that $\Pi_x=M_x^\dag M_x$ is indeed a positive operator
$\forall x$.
Besides, for any vector $|\varphi_\sma\rangle$ in $H_\sma$ we have
\begin{align}
\langle\varphi_\sma| \sum_x M^\dag_x M_x |\varphi_\sma\rangle &= 
\sum_x \langle\langle \omega_\smb,\varphi_\sma|U^\dag |x\rangle\langle x|
U|\varphi_\sma, \omega_\smb\rangle\rangle \notag \\
&=\langle\langle \omega_\smb,\varphi_\sma|U^\dag
U|\varphi_\sma, \omega_\smb\rangle\rangle=1\,,
\label{bigstar2} 
\end{align}
and since this is true for any $|\varphi_\sma\rangle$ we have $\sum_x M_x^\dag
M_x=\id$. Putting together Eqs. (\ref{bigstar1}) and (\ref{bigstar2}) we have 
that the set of operators $\Pi_x=M^\dag_x M_x$ is a POVM, with detection operators
$M_x$. In turn, the conditional state of the system $A$, after having
observed the outcome $x$, is given by
\begin{align}
\varrho_{\sma x} &= \frac1{p_x} \hbox{Tr}_\smb \left[U \varrho_\sma
\otimes |\omega_\smb\rangle\langle\omega_\smb | U^\dag\, \id\otimes P_x \right]
= \frac1{p_x}\sum_k \lambda_k \langle
x|U|\psi_k,\omega_\smb\rangle\rangle\langle\langle\omega_\smb,\psi_k
|U^\dag |x\rangle \notag \\ &= \frac1{p_x} M_x \varrho_\sma M_x^\dag
\end{align}
This is the half of the Naimark theorem: if we couple our system to an
ancilla, let them evolve and perform the measurement of an observable 
on the ancilla, which projects the ancilla on a basis in $H_\smb$, then
this procedure also modify the system. The transformation needs not to
be a projection. Rather, it is adequately described by a
set of detection operators which realizes a POVM on the system Hilbert
space. Overall, the meaning of the above proof is twofold: on the one hand 
we have shown that there exists realistic measurement schemes which are
described by POVMs when we look at the system only. At the same time, we
have shown that the partial trace of a spectral measure is a POVM, which
itself depends on the projective measurement performed on the
ancilla, and on its initial preparation. Finally, we notice that the 
scheme  of Figure 1 provides a general model for any kind of detector 
with internal degrees of freedom.
\par
Let us now address the converse problem: given a set of detection
operators $M_x$ which realizes a POVM $\sum_x M^\dag_x M_x=\id$, is 
this the system-only description of an indirect measurement 
performed a larger Hilbert space? In other words, there exists 
a Hilbert space $H_\smb$, a state $\varrho_\smb=
|\omega_\smb\rangle\langle\omega_\smb|\in L(H_\smb)$, a unitary 
$U\in L(H_\sma \otimes H_\smb)$, and a projective measurement 
$P_x=|x\rangle\langle x|$ in $H_\smb$ such that 
$M_x |\varphi_\sma\rangle = \langle x| U 
| \varphi_\sma,\omega_\smb\rangle\rangle$ holds for any
$|\varphi_\sma\rangle \in H_\sma$ and $\Pi_x= \langle\omega_\smb| 
U^\dag \id \otimes P_x U | \omega_\smb\rangle $? 
The answer is positive and we will provide a constructive proof.
Let us take $H_\smb$ with dimension equal to the number of
detection operators and of POVM elements, and choose a basis $|x\rangle$
for $H_\smb$, which in turn individuates a projective measurement.
Then we choose an arbitrary state $|\omega_\smb\rangle \in H_\smb$
and define the action of an operator U as
$$
U\,|\varphi_\sma\rangle \otimes |\omega_\smb \rangle = \sum_x
M_x\,|\varphi_\sma\rangle \otimes |x\rangle
$$
where $|\varphi_\sma\rangle \in H_\sma$ is arbitrary.
The operator $U$ preserves the scalar product
\begin{align}\notag
\langle\langle \omega_\smb,\varphi_\sma^\prime | U^\dag U|
\varphi_\sma,\omega_\smb \rangle\rangle 
= \sum_{x x^\prime}
\langle \varphi_\sma^\prime | M_{x^\prime}^\dag
M_x|\varphi_\sma\rangle \langle x^\prime|x\rangle
= \sum_{x}
\langle\varphi_\sma^\prime | M_{x^\prime}^\dag
M_x|\varphi_\sma\rangle
= 
\langle\varphi_\sma^\prime | \varphi_\sma\rangle
\end{align}
and so it is unitary in the one-dimensional subspace spanned by 
$|\omega_\smb\rangle$. Besides, it may be extended to a full 
unitary operator in the global Hilbert space $H_\sma\otimes H_\smb$, 
eg it can be the identity operator in the subspace orthogonal to
$|\omega_\smb\rangle$. Finally, for any $|\varphi_\sma\rangle\in
H_\sma$, we have 
$$\langle x|U|\varphi_\sma,\omega_\smb\rangle\rangle = 
\sum_{x^\prime} M_{x^\prime}|\varphi_\sma\rangle \langle
x|x^\prime\rangle = M_x |\varphi_\sma\rangle\,,$$
and
$$\langle\varphi_\sma|\Pi_x|\varphi_\sma\rangle=
\langle\varphi_\sma|M_x^\dag M_x|\varphi_\sma\rangle=
\langle\langle\omega_\smb,\varphi_\sma|U^\dag \id\otimes P_x
U|\varphi_\sma,\omega_\smb\rangle\rangle\,,$$
that is,
$\Pi_x= \langle\omega_\smb| 
U^\dag \id \otimes P_x U | \omega_\smb\rangle$. \hfill $\qed$
\par
This completes the {proof of the Naimark theorem}, which 
asserts that there is a one-to-one correspondence between POVM and indirect
measurements of the type describe above. In other words, an indirect
measurement may be seen as the physical implementation of a POVM and
any POVM may be realized by an indirect measurement. 
\par
The emerging picture is thus the following: In measuring a quantity of
interest on a physical system one generally deals with a larger system
that involves additional degrees of freedom, besides those of the system 
itself. These additional physical entities are globally referred to as the
apparatus or the ancilla. As a matter of fact, the measured quantity may be
always described by a standard observable, however on a larger
Hilbert space describing both the system and the apparatus. When we
trace out the degrees of freedom of the apparatus we are generally 
left with a POVM rather than a PVM. Conversely, any conceivable POVM,
i.e. a set of positive operators providing a resolution of identity, 
describe a generalized measurement, which may be always implemented 
as a standard measurement in a larger Hilbert space.
\par Before ending this Section, few remarks are in order:
\begin{itemize}
\item[{\bf R1}] The possible Naimark extensions are actually infinite, 
corresponding to the intuitive idea that there are infinite ways, with 
an arbitrary  number of ancillary systems, of measuring a given quantity. 
The construction reported above is sometimes referred to as the {\em 
canonical extension} of a POVM. The Naimark theorem just says that an 
implementation in terms of an ancilla-based indirect measurement is always possible,
but of course the actual implementation may be
different from the canonical one.
\item[{\bf R2}] The projection postulate described at the beginning of
this section, the scheme of indirect measurement, and the canonical 
extension of a POVM have in common the assumption that a
nondemolitive detection scheme takes place, in which the system after 
the measurement has been modified, but still exists. This is sometimes
referred to as a {\em measurement of the first kind} in textbook
quantum mechanics. Conversely, in a demolitive measurement or 
{\em measurement of the second kind}, 
the system is destroyed during the measurement and it makes 
no sense of speaking of the state of the system after the measurement. 
Notice, however, that for demolitive measurements on a field the 
formalism of generalized measurements provides the framework for the 
correct description  of the state evolution. As for example, let us
consider the detection of photons on a single-mode of the radiation field.
A demolitive photodetector (as those based on the absorption of light) 
realizes, in ideal condition, the measurement of the number operator 
$a^\dag a$ without leaving any photon in the mode .
If $\varrho=\sum_{np} \varrho_{np}|n\rangle\langle p|$ is the state
of the single-mode radiation field a photodetector of this kind gives
a natural number $n$ as output, with probability $p_n=\varrho_{nn}$,
whereas the post-measurement state is the vacuum $|0\rangle\langle 0|$
independently on the outcome of the measurement. This kind of
measurement is described by the orthogonal POVM $\Pi_n=|n\rangle\langle
n|$, made by the eigenvectors of the number operator, and by the
detection operator $M_n=|0\rangle\langle n|$. The proof is left as an
exercise.
\item[{\bf R3}] We have formulated and proved the Naimark theorem in a 
restricted form, suitable for our purposes. It should be noticed that it holds in
more general terms, as for example with extension of the Hilbert space
given by direct sum rather than tensor product, and also relaxing the
hypothesis \cite{Pau}.
%%%%
\subsubsection{Conditional states in generalized measurements}
If we have a
composite system and we perform a projective measurement on, say,
subsystem $A$, the conditional state of the unmeasured subsystem $B$
after the observation of the outcome $x$ is given by Eq.
(\ref{conditional}), i.e. it is the partial
trace of the projection of the state before the measurement on the
eigenspace of the observed eigenvalue. 
One may wonder whether a similar results holds also when the measurement
performed on the subsystem a $A$ is described by a POVM. The answer is
positive and the proof may be given in two ways. The first is based on
the observation that, thanks to the existence of a canonical Naimark
extension, we may write the state of the global system after the
measurement as $$\varrho_{\smab x} = \frac1{p_x} M_x \otimes \id_\smb\,
\varrho_\smab\, M_x^\dag \otimes \id_\smb\,,$$ and thus the conditional
state of subsystem $B$ is the partial trace $\varrho_{\smb
x}=\hbox{Tr}_\sma [\varrho_{\smab x}]$ i.e.
$$\varrho_{\smb x}=
\frac1{p_x}\hbox{Tr}_\sma [M_x \otimes \id_\smb\,
\varrho_\smab\, M_x^\dag \otimes \id_\smb
]= \frac1{p_x}\hbox{Tr}_\sma [
\varrho_\smab\, M_x^\dag M_x \otimes \id_\smb
]= \frac1{p_x}\hbox{Tr}_\sma [
\varrho_\smab\, \Pi_x \otimes \id_\smb
]\,,$$
where again we used the circularity of partial trace in the presence of
factorized operators. A second proof may be offered invoking the Naimark
theorem only to ensure the existence of an extension, i.e. a projective
measurement on a larger Hilbert space $H_\smc \otimes\ H_\sma$, which
reduces to the POVM after tracing over $H_\smc$. In formula, assuming
that $P_x \in L(H_\smc \otimes\ H_\sma)$ is a projector and $\sigma\in
L(H_\smc)$ a density operator
\begin{align}\notag
\varrho_{\smb x}
&=\frac1{p_x}\hbox{Tr}_{\smc\sma} \left[
P_x\otimes\id_\smb\,
\varrho_\smab \otimes \sigma\,
P_x\otimes\id_\smb
\right]=\frac1{p_x}\hbox{Tr}_{\smc\sma} \left[
\varrho_\smab \otimes \sigma\,
P_x\otimes\id_\smb
\right] \\ \notag
&=\frac1{p_x}\hbox{Tr}_{\sma} \left[
\varrho_\smab
\Pi_x\otimes\id_\smb
\right]\,.
\end{align}
\end{itemize}
%%%%%%%
\subsection{Joint measurement of non commuting observables}
\label{jm}
A common statement about quantum measurements says that it is
not possible to perform a joint measurement of two observables 
$Q_\sma$ and $P_\sma$ of a given system $A$ if they do not commute, i.e. 
$[Q_\sma,P_\sma]\neq 0$. This is related to the impossibility of finding 
any common set of projectors on the Hilbert space $H_\sma$ of the system 
and to define a joint observable. On the other hand, a question 
arises on whether common projectors may be found in a larger 
Hilbert space, i.e. whether one may implement a joint measurement 
in the form of a generalized measurement. The answer is indeed 
positive \cite{art1,yue82}: This Section is devoted to describe the canonical 
implementation of joint measurements for pair of observables
having a (nonzero) commutator $[Q_\sma,P_\sma]=c\, \id \neq 0$ 
proportional to the identity operator.
\par
The basic idea is to look for a pair of commuting observables 
$[X_\smab,Y_\smab]=0$ in a larger Hilbert space $H_\sma \otimes H_\smb$ 
which {\em trace} the observables $P_\sma$ and $Q_\sma$, i.e. which have
the same expectation values 
\begin{align}
\langle X_\smab\rangle\equiv 
\hbox{Tr}_\smab[X_\smab\, \varrho_\sma\otimes\varrho_\smb]  &=
\hbox{Tr}_\sma [Q_\sma\, \varrho_\sma] 
\equiv \langle Q_\sma \rangle \notag
\\
\langle Y_\smab\rangle\equiv 
\hbox{Tr}_\smab[Y_\smab\, \varrho_\sma\otimes\varrho_\smb]  &= 
\hbox{Tr}_\sma [P_\sma\, \varrho_\sma]
\equiv\langle P_\sma \rangle \label{j1}
\end{align}
for any state $\varrho_\sma \in H_\sma$ of the system under investigation, 
and a fixed suitable preparation $\varrho_\smb\in H_\smb$ of the 
system $B$. A pair of such observables may be found upon choosing  
a replica system $B$, identical to $A$, and considering the operators
\begin{align}
X_\smab & = Q_\sma \otimes \id_\smb + \id_\sma \otimes Q_\smb \notag \\
Y_\smab & = P_\sma \otimes \id_\smb - \id_\sma \otimes P_\smb \label{j2}
\end{align}
where $Q_\smb$ and $P_\smb$ are the analogue of $Q_\sma$ and $P_\sma$
for system $B$, see \cite{BV10} for more details involving the
requirement of covariance.
The operators in Eq. (\ref{j2}), taken together a state 
$\varrho_\smb \in H_\smb$ satisfying 
\begin{align}
\hbox{Tr}_\smb [Q_\smb\, \varrho_\smb] 
=\hbox{Tr}_\smb [P_\smb\, \varrho_\smb]=0\,, 
 \label{j3}
\end{align}
fulfill the conditions in Eq. (\ref{j1}), i.e. realize a joint generalized
measurement of the noncommuting observables $Q_\sma$ and $P_\sma$.
The operators $X_\smab$ and $Y_\smab$ are Hermitian by construction. 
Their commutator is given by
\begin{align}
[X_\smab,Y_\smab]= [Q_\sma,P_\sma] \otimes \id_\smb - 
\id_\sma \otimes [Q_\smb,P_\smb] = 0\,.
 \label{j4}
\end{align}
Notice that the last equality, i.e. the fact that the two operators
commute, is valid only if the commutator $[Q_\sma,P_\sma]=c\, \id$
is proportional to the identity. More general constructions are needed 
if this condition does not hold \cite{jsp1}. 
\par
Since the $[X_\smab,Y_\smab]=0$ the complex operator $Z_\smab = X_\smab 
+ i\, Y_\smab$ is {\em normal} i.e. $[Z_\smab,Z_\smab^\dag]=0$. For
normal operators the spectral theorem holds, and we may write  
\begin{align}
Z_\smab = \sum_z z\, P_z \qquad P_z=\kket{z} \bbra{z}\qquad Z_\smab
\kket{z} = z \kket{z}
\label{j5}
\end{align}
where $z\in {\mathbbm C}$, and $P_z$ are orthogonal projectors
on the eigenstates $\kket{z}\equiv \kket{z}_\smab$ of $Z_\smab$.
The set $\{P_z\}$ represents the common projectors individuating 
the joint observable $Z_\smab$. Each run of the measurement
returns a complex
number, whose real and imaginary parts correspond to
a sample of the $X_\smab$ and $Y_\smab$ values, aiming at sampling 
$Q_\sma$ and $P_\sma$. The statistics 
of the measurement is given by 
\begin{align}
p_\smz(z) = \hbox{Tr}_\smab[\varrho_\sma\otimes\varrho_\smb\, P_z]= 
\hbox{Tr}_\sma[\varrho_\sma\, \Pi_z] 
\label{j6}
\end{align}
where the POVM $\Pi_z$ is given by
\begin{align}
\Pi_z = \hbox{Tr}_\smb[\id_\sma\otimes\varrho_\smb\, P_z]\,.
\label{j7}
\end{align}
The mean values $\langle X_\smab\rangle=\langle Q_\sma\rangle$ and 
$\langle Y_\smab\rangle= \langle P_\sma\rangle$ 
are the correct ones by construction, where by saying "correct" we
intend the mean values that one would have recorded
by measuring the two observables $Q_\sma$ and $P_\sma$ separately 
in a standard (single) projective measurement on $\varrho_\sma$. 
On the other hand, the two marginal distributions
$$
p_\smx (x) = \int \! dy\, p_\smz (x+ i y) \qquad 
p_\smy (y) = \int \! dx\, p_\smz (x+ i y)\,,
$$
need not to reproduce the distributions obtained in single measurements.
In particular, for the measured variances 
$\langle \Delta X_\smab^2\rangle = 
\langle X_\smab^2\rangle - \langle X_\smab\rangle^2$
and $\langle\Delta Y_\smab\rangle$ one obtains
\begin{align}
\langle \Delta X_\smab^2\rangle &= \Tr \left[
(Q_\sma^2\otimes \id_\smb + \id_\sma \otimes Q_\smb^2 + 2\, Q_\sma \otimes
Q_\smb)\, \varrho_\sma \otimes \varrho_\smb
\right] - \langle Q_\sma \rangle^2 \notag \\
&= \langle \Delta Q_\sma^2 \rangle + \langle Q_\smb^2\rangle \notag 
\\\langle \Delta Y_\smab^2\rangle &= 
\langle \Delta P_\sma^2 \rangle + \langle P_\smb^2\rangle\, 
\label{j8}
\end{align}
where we have already taken into account that $\langle Q_\smb \rangle =
\langle P_\smb\rangle =0$. As it is apparent from Eqs. (\ref{j8}) the 
variances of $X_\smab$ and $Y_\smab$ are larger than those of the 
original, non commuting, observables $Q_\sma$ and $P_\sma$.
\par
Overall, we may summarize the emerging picture as follows: a
joint measurement of a pair of non commuting observables corresponds
to a generalized measurement and may be implemented as the measurement
of a pair of commuting observables on an enlarged Hilbert space.
Mean values are preserved whereas the non commuting nature of the original
observables manifests itself in the broadening of the marginal
distributions, i.e. as an additional noise term appears to both 
the variances. The uncertainty product may be written as
\begin{align}
\langle \Delta X_\smab^2\rangle \langle \Delta Y_\smab^2\rangle & 
= \langle \Delta Q_\sma^2\rangle \langle \Delta P_\sma^2\rangle +
\langle \Delta Q_\sma^2\rangle \langle P_\smb^2\rangle +
\langle Q_\smb^2\rangle \langle \Delta P_\sma^2\rangle +
\langle Q_\smb^2\rangle \langle P_\smb^2\rangle\,, 
\notag \\
& \geq 
\frac14 \big|[Q_\sma,P_\sma]\big|^2 +
\langle \Delta Q_\sma^2\rangle \langle P_\smb^2\rangle +
\langle Q_\smb^2\rangle \langle \Delta P_\sma^2\rangle +
\langle Q_\smb^2\rangle \langle P_\smb^2\rangle\,, 
\label{j9}
\end{align}
where the last three terms are usually referred to as the {\em added
noise} due to the joint measurement. 
If we perform a joint measurement on a minimum 
uncertainty state (MUS, see Appendix \ref{apUR}) 
for a given pair of observables 
(e.g. a coherent state in the joint measurement of a pair of 
conjugated quadratures of the radiation field) and use a MUS also 
for the preparation of the replica system (e.g. the vacuum), then 
Eq. (\ref{j9})
rewrites as 
\begin{align}
\langle \Delta X_\smab^2\rangle \langle \Delta Y_\smab^2\rangle 
= \big|[Q_\sma,P_\sma]\big|^2\,. 
\label{10}
\end{align}
This is four times the minimum attainable uncertainty product 
in the case of a measurement of a single observable (see Appendix
\ref{apUR}).
In terms of rms' $\Delta X = \sqrt{\langle \Delta X^2\rangle}$ we have
a factor $2$, which is usually referred to as the $3$ dB of added noise
in joint measurements.
The experimental realization of joint measurements of non commuting
observables has been carried out for conjugated quadratures of the
radiation field in a wide range of frequencies ranging from 
radiowaves to the optical domain, see e.g. \cite{wal}.  
%%%%%%%%%%%%%
\subsection{About the so-called Heisenberg
principle} \label{HP}
Let us start by quoting Wikipedia about the Heisenberg principle
\cite{wikiHP}
\begin{quote}
{\em Published by Werner Heisenberg in 1927, the principle implies that it is
impossible to simultaneously both measure the present position while
"determining" the future momentum of an electron or any other particle
with an arbitrary degree of accuracy and certainty. This is not a
statement about researchers' ability to measure one quantity while
determining the other quantity. Rather, it is a statement about the laws
of physics. That is, a system cannot be defined to simultaneously
measure one value while determining the future value of these pairs of
quantities. The principle states that a minimum exists for the product
of the uncertainties in these properties that is equal to or greater
than one half of  the reduced Planck constant.
}\end{quote}
As is it apparent from the above formulation, the principle 
is about the precision achievable in the measurement 
of an observable and the disturbance introduced by the same 
measurement on the state under investigation, which, in turn, 
would limit the precision of a subsequent measurement of the conjugated
observable. The principle, which has been quite useful in the 
historical development of quantum mechanics, has been inferred from 
the analysis of the celebrated Heisenberg' gedanken experiments, and thus 
is heuristic in  nature. However, since its mathematical 
formulation is related to that of the uncertainty relations (see
Appendix \ref{apUR}), it is
often though as a theorem following from the axiomatic structure of 
quantum mechanics.
This is not the case: here we exploit the
formalism of generalized measurements to provide 
an explicit
example of a measurement scheme providing the maximum information
about a given observable, i.e. the statistics of the corresponding PVM,
while leaving the state under investigation in an eigenstate of the 
conjugated observable.
\par
Let us consider the two noncommuting observables $[A,B]= c\, \id$
and the set of detection operators $M_a = |b\rangle\langle a|$ where
$|a\rangle$ and $|b\rangle$ are eigenstates of $A$ and $B$ respectively, 
i.e.  $A|a\rangle=a|a\rangle$, $B|b\rangle=b|b\rangle$. According to
the Naimark theorem the set of operators $\{M_a\}$ describe a
generalized measurement (e.g. an indirect measurement as the one 
depicted in Fig. 1) with statistics $p_a = \Tr [\varrho\,
\Pi_a]$ described by the POVM 
$\Pi_a = M^\dag_a M_a = |a\rangle\langle a|$ and where the conditional 
states after the measurement are given by 
$\varrho_a = \frac{1}{p_a} M_a \varrho M_a^\dag = |b\rangle\langle b|$.
In other words, the generalized measurement described by the set
$\{M_a\}$ has the same statistics of a Von-Neumann projective
measurement of the observable $A$, and leave the system under
investigating in an eigenstate of the observable $B$, thus 
{\em determining its future value with an arbitrary degree of accuracy
and certainty} and contrasting the formulation of the so-called Heisenberg 
principle reported above. 
An explicit unitary realization of this kind of measurement for the
case of position, as well as a detailed discussion on the 
exact meaning of the Heisenberg principle, and the tradeoff 
between precision and disturbance in a quantum measurement, 
may be found in \cite{Ozawa02}. 
%%%
\subsection{The quantum roulette}
Let us consider $K$ projective measurements corresponding to $K$
nondegenerate isospectral observables $X_k$, $k=1,...,K$ in a Hilbert
space $H$, and consider the following experiment.  The system is
sent to a detector which at random, with probability $z_k$, $\sum_k
z_k=1$, perform the measurement of the observable $X_k$. This is known
as the quantum roulette since the observable to be measured is chosen
at random, eg according to the outcome of a random generator like a
roulette. The probability of getting the outcome $x$ from the
measurement of the observable $X_k$ on a state $\varrho\in L(H)$ is
given by $p_x^{(k)} = \hbox{Tr}[\varrho\,P^{(k)}_x]$,
$P^{(k)}_x=|x\rangle_k{}_k\langle x|$, and the overall probability
of getting the outcome $x$ from our experiment is given by $$
p_x = \sum_k z_k p_x^{(k)}=\sum_k z_k \hbox{Tr}[\varrho\,P^{(k)}_x] =
\hbox{Tr}[\varrho\,\sum_k z_k P^{(k)}_x] =
\hbox{Tr}[\varrho\,\Pi_x]\,,$$
where the POVM describing the measurement is given by $\Pi_x=\sum_k z_k
P^{(k)}_x$. This is indeed a POVM and not a projective measurement since
$$\Pi_x\Pi_{x^\prime} = \sum_{kk^\prime} z_k z_{k^\prime}
P^{(k)}_xP^{(k^\prime)}_{x^\prime}\neq \delta_{xx^\prime} \Pi_x\,.$$  
Again, we have a practical
situation where POVMs naturally arise in order to describe the
statistics of the measurement in terms of the Born rule and the 
system density operator.
A Naimark extension for the quantum roulette may be obtained as follows.
Let us consider an additional {\em probe} system described by the 
Hilbert space $H_\smp$ of dimension $K$ equal to the number of measured 
observables in the roulette, and the set of projectors $Q_x=\sum_k
P^{(k)}_x \otimes |\theta_k\rangle\langle\theta_k|$ where
$\{|\theta_k\rangle\}$ is a basis for $H_\smp$. Then, upon preparing the 
probe system in the superposition $|\omega_P\rangle=\sum_k \sqrt{z_k}
|\theta_k\rangle$ we have that $p_x=\hbox{Tr}_{\sms\smp} [\varrho\otimes
|\omega_\smp\rangle\langle\omega_\smp|\, Q_x]$ and, in turn,
$\Pi_x = \hbox{Tr}_\smp[\id_\sms\otimes|\omega_\smp\rangle\langle
\omega_\smp|\, Q_x]=\sum_k z_k P^{(k)}_x$. The state of the system after
the measurement may be obtained as the partial trace 
\begin{align}\notag
\varrho_x &= \frac1{p_x} \hbox{Tr}_\smp \left[Q_x\,\varrho\otimes|\omega_\smp
\rangle\langle\omega_\smp|\,Q_x \right] \\\notag & = \frac1{p_x} \sum_k
\sum_{k^\prime} 
\hbox{Tr}_\smp \left[ P_x^{(k)}\otimes|\theta_k\rangle\langle\theta_k| 
\,\varrho\otimes|\omega_\smp \rangle\langle\omega_\smp|\,
P_x^{(k^\prime)}\otimes|\theta_{k^\prime}\rangle\langle\theta_{k^\prime}| 
\right] \\ \notag & =  \frac1{p_x} \sum_k z_k 
P_x^{(k)} \varrho\, P_x^{(k)}\:.
\end{align}
Notice that the presented Naimark extension is not the canonical one.
%%%%
\begin{exercise}
Prove that the operators $Q_x$ introduced for the Naimark 
extension of the quantum roulette, are indeed
projectors.
\end{exercise}
\begin{exercise} Take a system made by a single qubit system and 
construct the canonical Naimark extension for the quantum roulette 
obtained by measuring the observables $\sigma_\alpha=\cos\alpha\,
\sigma_1 + \sin\alpha\,\sigma_2$, where $\sigma_1$ and $\sigma_2$ 
are Pauli matrices and $\alpha\in[0,\pi]$ is chosen at random 
with probability density $p(\alpha)=\pi^{-1}$.
\end{exercise}
%%%%
\section{Quantum operations}
In this Section we address the dynamical evolution of quantum
systems to see whether the standard formulation in terms of unitary
evolutions needs a suitable generalization.  This is indeed the case: we
will introduce a generalized description and see how this reconciles
with what we call Postulate \ref{pqd} in the Introduction.  
We will proceed in close
analogy with what we have done for states and measurements. We start by
closely inspecting the physical motivations behind any mathematical
description of quantum evolution, and look for physically motivated
conditions that a map, intended to transform a quantum state into a
quantum state, from now on a {\em quantum operation}, should
satisfy to be admissible. This will lead us to the concept of complete
positivity, which suitably generalizes the motivations behind unitarity. We
then prove that any quantum operation may be seen as the partial trace 
of a unitary evolution in a larger Hilbert space, and illustrate a 
convenient form, the so-called Kraus or operator-sum representation, 
to express the action of a quantum operation on quantum states.
\par
By quantum operation we mean a map $\varrho \rightarrow \cE (\varrho)$
transforming a quantum state $\varrho$ into another quantum state 
$\cE (\varrho)$. The basic requirements on $\cE$ to describe
a physically admissible operations are those captured by the request
of unitarity in the standard formulation, i.e. 
\begin{itemize}
\item[${\boldsymbol{Q1}}$] The map is positive and trace-preserving,
i.e. $\cE (\varrho) \geq 0$ (hence selfadjoint) and $\hbox{Tr} 
[\cE (\varrho)] = \hbox{Tr}[\varrho]=1$. 
The last assumption may be relaxed to that of being trace 
non-increasing $0\leq \hbox{Tr} [\cE
(\varrho)]\leq 1$ in order to include evolution induced by measurements
(see below).
\item[${\boldsymbol{Q2}}$] The map is linear $\cE(\sum_k p_k
\varrho_k) = \sum_k p_k \cE(\varrho_k)$, i.e. the state obtained 
by applying the map to the ensemble $\{p_k, \varrho_k\}$ is 
the ensemble $\{p_k, \cE(\varrho_k)\}$.
\item[${\boldsymbol{Q3}}$] The map is completely positive (CP), 
i.e. besides being positive it is such that if we introduce an additional
system, any map of the form $\cE \otimes \id $ acting on the extended 
Hilbert space is also positive.  In other words, we ask that the map
is physically meaningful also when acting on a portion of a larger, 
composite, system. As we will see, this request is not trivial at all, 
i.e. there exist maps that are positive but not completely positive.
\end{itemize}
%%%%%%%%%%%
\subsection{The operator-sum representation} 
This section is devoted to state and prove a theorem showing that 
a map is a quantum operation if and only if it is the partial trace 
of a unitary evolution in a larger Hilbert space, and provides a 
convenient form, the so-called Kraus decomposition or operator-sum 
representation \cite{Pre,nota}, to express its action on quantum states. 
\begin{theorem}[Kraus] A map $\cE$ 
is a quantum operation i.e. it satisfies the 
requirements $\boldsymbol{Q1}$-$\boldsymbol{Q3}$ {if and only
if} is the partial trace of a unitary evolution on a larger Hilbert
space with factorized initial condition or, equivalently, 
it possesses a Kraus decomposition i. e. 
its action may be represented as $\cE(\varrho) = \sum_k M_k
\varrho M^\dag_k$ where $\{M_k\}$ is a set of operators 
satisfying $\sum_k M_k^\dag M_k=\id$.
\end{theorem}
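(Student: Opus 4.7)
The plan is to establish a triangle of implications among the three formulations (quantum operation, Kraus decomposition, partial trace of a unitary dilation), closing the loop to give the full equivalence. The easy leg is that the Kraus form satisfies $\boldsymbol{Q1}$--$\boldsymbol{Q3}$: linearity is built into the expression, positivity follows since $M_k\varrho M_k^\dag\geq 0$ whenever $\varrho\geq 0$, trace preservation is $\Tr[\sum_k M_k\varrho M_k^\dag]=\Tr[\varrho\sum_k M_k^\dag M_k]=\Tr\varrho$, and complete positivity comes for free by writing $(\cE\otimes\id)(\sigma)=\sum_k(M_k\otimes\id)\sigma(M_k^\dag\otimes\id)$.

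Next I would show the equivalence between the Kraus form and the partial-trace-of-unitary form. Assuming the dilation $\cE(\varrho)=\Tr_\smb[U(\varrho\otimes|\omega_\smb\rangle\langle\omega_\smb|)U^\dag]$, I pick a basis $\{|k\rangle\}$ of $H_\smb$ to carry out the partial trace and read off $M_k\equiv\langle k|U|\omega_\smb\rangle$; then $\sum_k M_k^\dag M_k=\langle\omega_\smb|U^\dag U|\omega_\smb\rangle=\id$. Conversely, starting from a Kraus family $\{M_k\}$ I recycle the Naimark-style construction already carried out in the paper: take $H_\smb$ with basis $\{|k\rangle\}$ indexed by the Kraus operators, choose an arbitrary unit vector $|\omega_\smb\rangle\in H_\smb$, and define $U(|\varphi\rangle\otimes|\omega_\smb\rangle)=\sum_k M_k|\varphi\rangle\otimes|k\rangle$. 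The identity $\sum_k M_k^\dag M_k=\id$ is precisely what makes $U$ inner-product preserving on the subspace $H_\sma\otimes|\omega_\smb\rangle$, and any isometry on a subspace extends to a unitary on the full $H_\sma\otimes H_\smb$; tracing over $H_\smb$ then reproduces $\cE$.

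The hard leg, and the main obstacle, is showing that any $\cE$ satisfying $\boldsymbol{Q1}$--$\boldsymbol{Q3}$ admits a Kraus decomposition. For this I would invoke the Choi--Jamio\l{}kowski construction. Fix an orthonormal basis $\{|i\rangle\}$ of $H_\sma$, introduce an auxiliary copy $H_\sma'$ with basis $\{|i'\rangle\}$, and consider the (unnormalized) maximally entangled vector $|\Omega\rangle=\sum_i|i\rangle\otimes|i'\rangle\in H_\sma\otimes H_\sma'$. Define the Choi operator
\begin{equation}
R_\cE=(\cE\otimes\id_{\sma'})\bigl(|\Omega\rangle\langle\Omega|\bigr)=\sum_{ij}\cE(|i\rangle\langle j|)\otimes|i'\rangle\langle j'|.\notag
\end{equation}
Since $|\Omega\rangle\langle\Omega|\geq 0$, complete positivity ($\boldsymbol{Q3}$) forces $R_\cE\geq 0$, so it admits a spectral decomposition $R_\cE=\sum_k|m_k\rangle\langle m_k|$ with the eigenvalues absorbed into the (unnormalized) $|m_k\rangle$.

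Each $|m_k\rangle\in H_\sma\otimes H_\sma'$ can be written in the product basis as $|m_k\rangle=\sum_{ij}(M_k)_{ij}|i\rangle\otimes|j'\rangle\equiv(M_k\otimes\id_{\sma'})|\Omega\rangle$, which defines an operator $M_k\in L(H_\sma)$. Substituting back and matching coefficients of $|i'\rangle\langle j'|$ yields $\cE(|i\rangle\langle j|)=\sum_k M_k|i\rangle\langle j|M_k^\dag$ for all $i,j$, and linearity ($\boldsymbol{Q2}$) extends this to $\cE(\varrho)=\sum_k M_k\varrho M_k^\dag$ for every $\varrho$. Finally, the trace-preservation part of $\boldsymbol{Q1}$ applied to the arbitrary operator $|i\rangle\langle j|$ gives $\langle j|\sum_k M_k^\dag M_k|i\rangle=\delta_{ij}$, i.e.\ $\sum_k M_k^\dag M_k=\id$, completing the cycle. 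The subtle point I would emphasize is that positivity alone is insufficient here---without $\boldsymbol{Q3}$ the operator $R_\cE$ need not be positive (the transposition map provides the standard counterexample), so complete positivity is exactly the missing ingredient that upgrades a positive linear trace-preserving map to one admitting a dilation.
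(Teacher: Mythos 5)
Your proof is correct and follows essentially the same route as the paper: the hard direction rests on the Choi operator $(\cE\otimes\id)(|\Omega\rangle\langle\Omega|)$, positive by complete positivity, whose spectral decomposition is reshaped into Kraus operators, and the unitary dilation is the same Naimark-style isometry extension used in the text. Your coefficient-matching on the matrix units $|i\rangle\langle j|$ is a slightly cleaner equivalent of the paper's $\langle\tilde\psi|\varrho_{\sma\sma}|\tilde\psi\rangle=\frac1d\,\cE(|\psi\rangle\langle\psi|)$ device, and you make explicit the derivation of $\sum_k M_k^\dag M_k=\id$ from trace preservation, which the paper leaves implicit.
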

\begin{proof} The first part of the theorem consists in assuming that 
$\cE(\varrho)$ is the partial trace of a unitary operation in a larger 
Hilbert space and prove that it has a Kraus decomposition and, in turn,
it satisfies the requirements $\boldsymbol{Q1}$-$\boldsymbol{Q3}$.
Let us consider a physical system $A$ prepared in the 
quantum state $\varrho_\sma$ and another system $B$ prepared in the 
state $\varrho_\smb$. $A$ and $B$ interact through the unitary 
operation $U$ and we are interested in describing the effect of this 
interaction on the system $A$ only, i.e. we are looking for the expression 
of the mapping $\varrho_\sma \rightarrow \varrho^\prime_\sma = 
\cE(\varrho_\sma)$ induced by the interaction. This may be obtained by 
performing the partial trace over the system $B$ of the global $AB$ 
system after the interaction, in formula
\begin{align}\notag
\cE(\varrho_\sma) &= \hbox{Tr}_\smb \left[U\, \varrho_\sma \otimes \varrho_\smb U^\dag 
\right] = \sum_{s} p_s  \hbox{Tr}_\smb \left[U\, \varrho_\sma \otimes 
|\theta_s\rangle\langle\theta_s| U^\dag \right] \\ &= \sum_{st} 
p_s \langle\varphi_t | U | \theta_s \rangle \, \varrho_\sma \langle
\theta_s | U^\dag | \varphi_t\rangle = \sum_k M_k
\,\varrho_\sma M^\dag_k
\label{add1}
\end{align}
where we have 
introduced the operator $M_{k}
=\sqrt{p_s}\langle\varphi_t|U|\theta_s\rangle$, with the polyindex $k\equiv st$
obtained by a suitable ordering, and used 
the spectral decomposition of the density operator 
$\varrho_\smb = \sum_s p_s |\theta_s\rangle\langle\theta_s |$.
Actually, we could have also assumed the additional system in a pure 
state $|\omega_\smb\rangle$, since this is always possible upon invoking a 
purification, i.e. by suitably enlarging the Hilbert space. In this case 
the elements in the Kraus decomposition of our map would have be written 
as $\langle\varphi_t| U|\omega_\smb\rangle $.
The set of operators $\{M_k\}$ satisfies the relation $$\sum_k M^\dag
M_k = \sum_{st} p_s \theta_s | U^\dag | \varphi_t\rangle\langle\varphi_t | U 
| \theta_s \rangle = \sum_{s} p_s \langle \theta_s | U^\dag  U | \theta_s
\rangle =\id\,.$$
Notice that the assumption of a factorized initial state 
is crucial to prove the existence of a Kraus decomposition and, 
in turn, the complete positivity. In fact, the dynamical map 
$\cE(\varrho_\sma) =  \hbox{Tr}_\smb \left[U\, \varrho_\smab\,  
U^\dag\right]$ resulting from the partial trace of an 
initially correlated  preparation $\varrho_\smab$ needs not to be so. 
In this case, the dynamics can properly be defined only on a subset 
of initial states of the system. Of course, the map can be extended 
to all possible initial states by linearity, but the 
extension may not be physically realizable, i.e. may be not completely
positive or even positive \cite{PP94}.
\par
We now proceed to show that for map of the form (\ref{add1}) 
(Kraus decomposition) the
properties  $\boldsymbol{Q1}$-$\boldsymbol{Q3}$ hold. Preservation of 
trace and of the Hermitian character, as well as linearity, are
guaranteed by the very form of the map. Positivity is also ensured,
since for any positive operator $O_\sma\in L(H_\sma)$ and any vector 
$|\varphi_\sma\rangle \in H_\sma$  we have
\begin{align}
\langle\varphi_\sma| \cE(O_\sma)|\varphi_\sma\rangle 
&= 
\langle\varphi_\sma|\sum_k M_k\, O_\sma M_k^\dag|\varphi_\sma\rangle 
= 
\langle\varphi_\sma|\hbox{Tr}_\smb[ U\, O_\sma\otimes\varrho_\smb\, U^\dag]|\varphi_\sma\rangle 
\notag \\
&= 
\hbox{Tr}_{\sma\smb}[U^\dag|\varphi_\sma\rangle\langle\varphi_\sma|
\otimes \id\, U\, O_\sma\otimes\varrho_\smb\, ]
\geq 0 \quad\forall\, O_\sma, \forall\, \varrho_\smb, \forall\,
|\varphi_\sma\rangle \,.\notag\end{align} 
Therefore it remains 
to be proved that
the map is completely positive. To this aim let us consider a positive
operator $O_\smac \in L(H_\sma\otimes H_\smc)$ and 
a generic state $|\psi_{\smac}\rangle\rangle$ on the same enlarged space,
and define 
$$|\omega_k \rangle\rangle= \frac1{\sqrt{N_k}}M_k \otimes \id_\smc
|\psi_{\sma\smc}\rangle\rangle \qquad N_k=\langle\langle\psi_{\sma\smc}|
M_k^\dag M_k \otimes \id_\smc|\psi_{\sma\smc}\rangle\rangle\geq0\,.$$
Since $O_\smac$ is positive we have 
$$
\langle\langle\psi_{\smac}|  (M_k^\dag\otimes \id_\smc)
\, O_\smac (M_k \otimes \id_\smc) | \psi_{\smac}\rangle\rangle = 
N_k \langle\langle\omega_k | O_\smac | \omega_k\rangle\rangle \geq 0 
$$ and therefore $\langle\langle\psi_{\smac}|\cE \otimes \id_\smc
(O_\smac)
|\psi_{\smac}\rangle\rangle = \sum_k 
N_k \langle\langle\omega_k | O_\smac | \omega_k\rangle\rangle \geq 0 $, which
proves that for any positive $O_\smac$ also $\cE \otimes \id_\smc
(O_\smac)$ is
positive for any choice of $H_\smc$, i.e. $\cE$ is a CP-map. 
\par
Let us now prove the second part of the theorem, i.e. we consider a 
map $\cE:L(H_\sma) \rightarrow L(H_\sma)$ 
satisfying the requirements $\boldsymbol{Q1}$-$\boldsymbol{Q3}$ and show
that it may be written in the Kraus form and, in turn, that its action 
may be obtained as the partial trace of a unitary evolution in a larger Hilbert.
We start by considering the state $|\varphi\rangle\rangle =
\frac{1}{\sqrt{d}}\sum_k |\theta_k\rangle\otimes |\theta_k\rangle\in
H_\sma\otimes H_\sma$ and define the operator $\varrho_{\sma\sma} = 
\cE \otimes \id (|\varphi\rangle\rangle\langle\langle \varphi|)$. 
From the complete positivity and trace preserving properties of 
$\cE$ we have that $\hbox{Tr}[\varrho_{\sma\sma}]=1$, and 
$\varrho_{\sma\sma}\geq 0$, i.e.
$\varrho_{\sma\sma}$ is a density operator. Besides, this establishes a
one-to-one correspondence between maps $L(H_\sma) \rightarrow L(H_\sma)$
and density operators in $L(H_\sma) \otimes L(H_\sma)$ which may
be proved as follows: for any $|\psi\rangle=\sum_k \psi_k| \theta_k\rangle\in
H_\sma$ define $|\tilde\psi\rangle=\sum_k \psi_k^* | \theta_k\rangle$
and notice that 
$$
\langle\tilde\psi| \varrho_{\sma\sma} | \tilde\psi\rangle = \frac1d  
\langle\tilde\psi| \sum_{kl}\cE(|\theta_k \rangle\langle\theta_l|) \otimes 
|\theta_k\rangle\langle\theta_l| \,|\tilde\psi\rangle 
=\frac1d \sum_{kl} \psi_l^* \psi_k\, \cE(|\theta_k\rangle\langle\theta_l|) =
\frac1d\,\cE(|\psi\rangle\langle\psi |)\,,
$$
where we used linearity to obtain the last equality. Then define 
the operators $M_k|\psi\rangle = \sqrt{d p_k}\langle\tilde\psi |
\omega_k\rangle\rangle$, where $|\omega_k\rangle\rangle$ are the
eigenvectors of $\varrho_{\sma\sma}=\sum_k
p_k|\omega_k\rangle\rangle\langle\langle\omega_k| $: this is a linear 
operator on $H_\sma$ and we have
$$
\sum_k M_k |\psi\rangle\langle\psi| M_k^\dag = d \sum_k p_k
\langle\tilde\psi|\omega_k\rangle\rangle\langle\langle\omega_k|\tilde\psi\rangle
= d \langle\tilde\psi|\varrho_{\sma\sma}|\tilde\psi\rangle =
\cE(|\psi\rangle\langle\psi |)
$$
for all pure states. Using again linearity we have that $\cE(\varrho) =
\sum_k M_k \varrho M^\dag_k$ also for any mixed state. It remains to be
proved that a unitary extension exists, i.e. to
prove that for any map on $L(H_\sma)$ which satisfies
$\boldsymbol{Q1}$-$\boldsymbol{Q3}$, and thus possesses a Kraus
decomposition, there exist: i) a Hilbert 
space $H_\smb$, ii) a state $|\omega_\smb\rangle \in H_\smb$, iii) a unitary 
$U\in L(H_\sma\otimes H_\smb)$ such that $\cE(\varrho_\sma)
=\hbox{Tr}_\smb [U\,\varrho_\sma \otimes
|\omega_\smb\rangle\langle\omega_\smb| U^\dag]$
for any $\varrho_\sma \in L(H_\sma)$. To this aim we proceed as we did
for the proof of the Naimark theorem, i.e. we take an arbitrary state 
$|\omega_\smb\rangle \in H_\smb$, and define an operator $U$ trough its
action on the generic $\varphi_\sma\rangle \otimes |\omega_\smb
\rangle\in H_\sma \otimes H_\smb$, 
$
U\,|\varphi_\sma\rangle \otimes |\omega_\smb \rangle = \sum_k
M_k\,|\varphi_\sma\rangle \otimes |\theta_k\rangle$, 
where the $|\theta_k\rangle$'s are a basis for $H_\smb$.
The operator $U$ preserves the scalar product
\begin{align}\notag
\langle\langle \omega_\smb,\varphi_\sma^\prime | U^\dag U|
\varphi_\sma,\omega_\smb \rangle\rangle 
= \sum_{k k ^\prime}
\langle \varphi_\sma^\prime | M_{k^\prime}^\dag
M_k|\varphi_\sma\rangle \langle \theta_{k^\prime}|\theta_k\rangle
= \sum_{k}
\langle\varphi_\sma^\prime | M_{k}^\dag
M_k|\varphi_\sma\rangle
= 
\langle\varphi_\sma^\prime | \varphi_\sma\rangle
\end{align}
and so it is unitary in the one-dimensional subspace spanned by 
$|\omega_\smb\rangle$. Besides, it may be extended to a full 
unitary operator in the global Hilbert space $H_\sma\otimes H_\smb$, 
eg it can be the identity operator in the subspace orthogonal to
$|\omega_\smb\rangle$. Then, for any $\varrho_\sma$ in $H_\sma$ we have
\begin{align}\notag
\hbox{Tr}_\smb \left[
U \varrho_\sma \otimes |\omega_\smb\rangle\langle\omega_\smb |\, U^\dag
\right]&= 
\sum_s p_s\, \hbox{Tr}_\smb \left[
U |\psi_s\rangle\langle\psi_s | \otimes
|\omega_\smb\rangle\langle\omega_\smb |\, U^\dag\right]
\\ \notag &=
\sum_{skk^\prime} p_s\, \hbox{Tr}_\smb \left[
M_k |\psi_s\rangle\langle\psi_s |\,M_{k^\prime}^\dag \otimes
|\theta_k\rangle\langle\theta_{k^\prime}| \right]
\\ \notag &=
 \sum_{sk} p_s\, M_k |\psi_s\rangle\langle\psi_s |\,M_{k}^\dag
= \sum_{k}  M_k \varrho_\sma M_{k}^\dag
\qquad \qed
\end{align}
\end{proof}
The Kraus decomposition of a quantum operation generalizes the 
unitary description of quantum evolution. Unitary maps are, of course,  
included and correspond to maps whose Kraus decomposition contains a 
single elements. The set of quantum operations constitutes a semigroup, 
i.e. the composition 
of two quantum operations is still a quantum 
operation: $$\cE_2 (\cE_1(\varrho))
= \sum_{k_1} M^{(1)}_{k_1} \cE_2(\varrho) M^{(1)\dag}_{k_1}= 
\sum_{k_1k_2} 
M^{(1)}_{k_1}
M^{(2)}_{k_2}
\varrho
M^{(2)\dag}_{k_2}
M^{(1)\dag}_{k_1}
=\sum_{\boldsymbol{k}} 
\boldsymbol{M}_{\boldsymbol{k}} 
\varrho
\boldsymbol{M}_{\boldsymbol{k}}^\dag\,, $$ 
where we have introduced the polyindex $\boldsymbol{k}$. Normalization
is easily proved, since 
$\sum_{\boldsymbol{k}} \boldsymbol{M}_{\boldsymbol{k}}^\dag 
\boldsymbol{M}_{\boldsymbol{k}} = \sum_{k_1k_2} M^{(2)\dag}_{k_2}
M^{(1)\dag}_{k_1} M^{(1)}_{k_1} M^{(2)}_{k_2}=\id$.
On the other hand, the existence of 
inverse is not guaranteed:
actually only unitary operations are invertible (with a CP inverse).
\par
The Kraus theorem also allows us to have a unified picture of quantum
evolution, either due to an interaction or to a measurement. In fact,  
the modification of the state in the both processes is described by a set of 
operators $M_k$ satisfying $\sum_k M^\dag_k M_k = \id$. In this
framework, the Kraus operators of a measurement are what we have
referred to as the detection operators of a POVM.
%%%%%%%%%%%
\subsubsection{The dual map and the unitary equivalence}
Upon writing the generic expectation value for the evolved 
state $\cE(\varrho)$ and exploiting both linearity and circularity 
of trace we have 
$$\langle X\rangle = \hbox{Tr} [\cE(\varrho )\, X]=
\sum_k \hbox{Tr}[M_k \varrho M_k^\dag \, X]
= \sum_k \hbox{Tr}[\varrho\, M_k^\dag X M_k]
=\hbox{Tr}[\varrho \cE^\vee (X)]\,,$$ where we have defined 
the dual map $\cE^\vee (X)=\sum_k M_k^\dag X M_k$ which represents
the "Heisenberg picture" for quantum operations.
Notice also that the elements of the Kraus decomposition 
$M_k=\langle\varphi_k | U|\omega_\smb\rangle$ depend on
the choice of the basis used to perform the partial trace.
Change of basis cannot have a physical effect and this means 
that the set of operators $$N_k=\langle\theta_k| U|\omega_\smb\rangle = 
\sum_s \langle\theta_k|\varphi_s\rangle\langle\varphi_s| U|\omega_\smb\rangle = 
\sum_s V_{ks} M_s \,,$$ where the unitary $V\in L(H_\smb)$ 
describes the change of basis, and the original set $M_k$ 
actually describe the same 
quantum operations, i.e. 
$\sum_k N_k \varrho N_k^\dag=\sum_k M_k \varrho M_k^\dag$, $\forall
\varrho$. 
The same can be easily proved for the system $B$ prepared in mixed state.
The origin of this degree of freedom stays in the fact that if the 
unitary $U$ on $H_\sma \otimes H_\smb$ and the state
$|\omega_\smb\rangle\in H_\smb$ realize an extension for the map
$\cE:L(H_\sma) \rightarrow L(H_\sma)$ then any unitary of the form $(\id
\otimes V) U$ is a unitary extension too, with the same ancilla state.
A quantum operation is thus identified by an equivalence class of 
Kraus decompositions. 
An interesting corollary is that any quantum operation on a given
Hilbert space of dimension $d$ may be generated by a Kraus decomposition
containing at most $d^2$ elements, i.e. given a Kraus decomposition 
$\cE(\varrho) = \sum_k M_k \varrho M_k^\dag$ with an arbitrary number of
elements, one may exploit the unitary equivalence and find another 
representation $\cE(\varrho) = 
\sum_k N_k \varrho N_k^\dag$ with at most $d^2$ elements.
%%%%%%%%%%%
\subsection{The random unitary map and the depolarizing channel}
A simple example of quantum operation is the random unitary map,
defined by the Kraus decomposition $\cE(\varrho) = \sum_k p_k U_k \varrho
U^\dag_k$, i.e. $M_k=\sqrt{p_k}\, U_k$ and $U_k^\dag U_k=\id$. This map 
may be seen as the evolution resulting from the interaction of our system
with another system of dimension equal to the number of elements in the 
Kraus decomposition of the map via the unitary $V$ defined by
$V|\psi_\sma\rangle\otimes|\omega_\smb\rangle=\sum_k \sqrt{p_k}\, U_k
|\psi_\sma\rangle\otimes|\theta_k\rangle$, $|\theta_k\rangle$ being a
basis for $H_\smb$ which includes $|\omega_\smb\rangle$. 
If "we do not look" at the system $B$ and trace out
its degree of freedom the evolution of system $A$ is governed by
the random unitary map introduced above. 
\begin{exercise} Prove explicitly the unitarity of V. \end{exercise} 
%%%%
The operator-sum representation of quantum evolutions have been
introduced, and finds its natural application, for the description 
of propagation in noisy channels, i.e. the evolution resulting from 
the interaction of the system of interest with an external environment, 
which generally introduces noise in the system degrading its
coherence.
As for example, let us consider a qubit system (say, the
polarization of a photon), on which we have encoded binary information 
according to a suitable coding procedure, traveling from a sender to a 
receiver. The propagation needs a physical support (say, an optical
fiber) and this unavoidably leads to consider possible perturbations
to our qubit, due to the interaction with the environment. The resulting 
open system dynamics is usually governed by a Master equation, i.e. the
equation obtained by partially tracing the Schroedinger 
(Von Neumann) equation governing the dynamics of  the global system, 
and the solution 
is expressed in form 
of a CP-map. For a qubit $Q$ in a noisy environment a quite general
description of the detrimental effects of the environment is the
so-called depolarizing channel \cite{nie00}, 
which is described by the Kraus operator
$M_0 = \sqrt{1-\gamma}\,\sigma_0$, $M_k=\sqrt{\gamma/3}\,\sigma_k$,
$k=1,2,3$, i.e. $$\cE(\varrho) = (1-\gamma) \varrho + \frac{\gamma}{3}
\sum_k \sigma_k\, \varrho\,\sigma_k \qquad 0\leq \gamma\leq 1\,.$$ 
The depolarizing channel may be seen
as the evolution of the qubit due to the interaction with a
four-dimensional system through the unitary
$$V|\psi_\smq\rangle\otimes|\omega_\sme\rangle =\sqrt{1-\gamma}
|\psi_\smq\rangle\otimes|\omega_\sme\rangle  +  \sqrt{\frac{\gamma}3} 
\sum_{k=1}^3 \sigma_k 
|\psi_\smq\rangle\otimes|\theta_k\rangle\,,$$ $|\theta_k\rangle$ being a
basis which includes $|\omega_\sme\rangle$. 
From the practical point view, the map describes a situation in 
which, independently on the underlying physical mechanism, we have a
probability $\gamma/3$ that a perturbation described by a Pauli matrix
is applied to the qubit. If we apply $\sigma_1$  we have the so-called 
spin-flip i.e. the exchange $|0\rangle \leftrightarrow |1\rangle$,
whereas if we apply $\sigma_3$ we have the phase-flip, and for $\sigma_2$ 
we have a specific combination of the two effects.
Since for any state of a qubit $\varrho + \sum_k
\sigma_k\varrho\sigma_k= 2\id$ the action of the depolarizing channel
may be written as 
$$\cE(\varrho) = (1-\gamma) \varrho + \frac{\gamma}{3}
(2 \id -\varrho) = \frac23 \gamma \id + (1- \frac43 \gamma )\varrho = 
p\varrho + (1-p) \frac{\id}2\,,$$ 
where $p=1-\frac43 \gamma$, i.e. $-\frac13\leq
p\leq 1$. In other words, we have that the original state $\varrho$ is sent
to a linear combination of itself and the maximally mixed state
$\frac{\id}2$, also referred to as the depolarized state.
\begin{exercise}
Express the generic qubit state in Bloch
representation and explicitly write the effect of the depolarizing
channel on the Bloch vector. 
\end{exercise}
\begin{exercise} Show that the purity of a qubit cannot increase
under the action of the depolarizing channel.
\end{exercise}
%%%
%%%%%%%%%%%
\subsection{Transposition and partial transposition}
The transpose $T(X)=X^\smt$ of an operator $X$ is the conjugate of its 
adjoint $X^\smt = (X^\dag)^* = (X^*)^\dag$. Upon the choice of a basis 
we have
$X=\sum_{nk} X_{nk} |\theta_n\rangle\langle\theta_k |$ and thus
$X^\smt=\sum_{nk} X_{nk} |\theta_k\rangle\langle\theta_n |
=\sum_{nk} X_{kn} |\theta_n\rangle\langle\theta_k |$.
Transposition does not change the trace of an operator, neither 
its eigenvalues. Thus it transforms density operators into density 
operators: $\hbox{Tr}[\varrho]=\hbox{Tr}[\varrho^\smt]=1$
$\varrho^\smt \geq 0$ if $\varrho\geq 0$. As a positive, trace
preserving, map it is a candidate  to be a quantum operation. 
On the other hand, we will show by a counterexample that it fails to
be completely positive and thus it does not correspond to physically
admissible quantum operation. 
%\par
Let us consider a bipartite system formed by two qubits prepared in 
the state $|\varphi\rangle\rangle=\frac1{\sqrt{2}}\, 
|00\rangle\rangle +|11\rangle\rangle$. We denote by $\varrho^\tau = \id
\otimes T (\varrho)$ the partial transpose of $\varrho$ i.e. the operator
obtained by the application of the transposition map to one of the two
qubits. We have 
\begin{align} \notag
\big(|\varphi\rangle\rangle\langle
\langle\varphi|\big)^\tau
&= \frac12 
\left(
\begin{array}{cccc}
1 & 0 & 0 & 1 \\ 
0 & 0 & 0 & 0 \\ 
0 & 0 & 0 & 0 \\ 
1 & 0 & 0 & 1 
\end{array}
\right)^\tau
\\ \notag  
& =\frac12 
\Big(
|0\rangle\langle 0| \otimes|0\rangle\langle 0| + 
|1\rangle\langle 1| \otimes|1\rangle\langle 1| +
|0\rangle\langle 1| \otimes|0\rangle\langle 1| + 
|1\rangle\langle 0| \otimes|1\rangle\langle 0|  
\Big)^\tau
\\ \notag  
& =\frac12 
\Big(
|0\rangle\langle 0| \otimes|0\rangle\langle 0| + 
|1\rangle\langle 1| \otimes|1\rangle\langle 1| +
|0\rangle\langle 1| \otimes|1\rangle\langle 0| + 
|1\rangle\langle 0| \otimes|0\rangle\langle 1|  
\Big) 
\\ \notag  
& =\frac12 
\left(
\begin{array}{cccc}
1 & 0 & 0 & 0 \\ 
0 & 0 & 1 & 0 \\ 
0 & 1 & 0 & 0 \\ 
0 & 0 & 0 & 1 
\end{array}
\right)
\end{align}
Using the last expression it is straightforward to evaluate the eigenvalues 
of $\varrho^\tau$, which are $+\frac12$ (multiplicity three) and
$-\frac12$. In other words $\id \otimes T$ is not a positive map and the
transposition is not completely positive. 
Notice that for a factorized state of the form
$\varrho_\smab=\varrho_\sma \otimes \varrho_\smb$ we have 
$\id \otimes T (\varrho_\smab) = \varrho_\sma \otimes \varrho_\smb^\smt
\geq 0$ i.e. partial transposition preserves positivity in this case .
\begin{exercise}
Prove that transposition is not a CP-map 
by its action on any state of the form $
|\varphi\rangle\rangle = 
\frac1{\sqrt{d}} \sum_k |\varphi_k\rangle\otimes 
|\theta_k\rangle$. Hint: the operator 
$\id \otimes T (|\varphi\rangle\rangle\langle\langle\varphi|)\equiv E$
is the so-called swap operator since it "exchanges" states
as $E(|\psi\rangle_\sma\otimes|\varphi\rangle_\smb) = |\varphi\rangle_\sma
\otimes |\psi\rangle_\smb$.
\end{exercise}
\section{Conclusions}
In this tutorial, we have addressed the postulates of quantum 
mechanics about states, measurements and operations. 
We have reviewed their modern formulation and 
introduced the basic mathematical tools: density
operators, POVMs, detection operators and CP-maps. We have 
shown how they
provide a suitable framework to describe quantum systems in interaction
with their environment, and with any kind of measuring and processing
devices. The connection with the standard formulation have been 
investigated in details building upon the concept of purification and 
the Theorems of  Naimark and Stinespring/Kraus-Choi-Sudarshan.
\par
The framework and the tools illustrated in this tutorial are suitable
for the purposes of quantum information science and technology, a field
which has fostered new experiments and novel views on the conceptual
foundation of quantum mechanics, but has so far little impact on the 
way that it is taught. We hope to contribute in disseminating these notions
to a larger audience, in the belief that they are useful for several 
other fields, from condensed matter physics to quantum biology.
%%%%%%%%%%%
\begin{acknowledgement} 
I'm grateful to Konrad Banaszek, Alberto Barchielli, 
Maria Bondani, Mauro D'Ariano, Ivo P.
Degiovanni, Marco Genoni, Marco Genovese, Paolo Giorda, Chiara
Macchiavello, Sabrina Maniscalco, Alex Monras, Stefano Olivares, Jyrki
Piilo, Alberto Porzio, Massimiliano Sacchi, Ole Steuernagel, and Bassano
Vacchini for the interesting and fruitful discussions about foundations
of quantum mechanics and quantum optics over the years. I would also
like to thank Gerardo Adesso, Alessandra Andreoni, Rodolfo Bonifacio, 
Ilario Boscolo, Vlado Buzek, Berge Englert, Zdenek Hradil, Fabrizio 
Illuminati, Ludovico Lanz, 
Luigi Lugiato, Paolo Mataloni, Mauro Paternostro, Mladen Pavi\v{c}i\'{c}, 
Francesco 
Ragusa, Mario Rasetti, Mike Raymer, Jarda \v{R}eh\'{a}\v{c}ek, Salvatore
Solimeno, and Paolo Tombesi.
\end{acknowledgement} 
%%%%%

\section*{Further readings}
{\small
\begin{enumerate}
\item 
I. Bengtsson, K. Zyczkowski, {\em Geometry of Quantum
States}, (Cambridge University Press, 2006).
\item Lectures and reports by C. M. Caves, available at
{\tt http://info.phys.unm.edu/$\,\,\widetilde{}$caves/}
\item
P. Busch, M. Grabowski, P. J. Lahti,{\em Operational Quantum Mechanics}, Lect. 
Notes. Phys. {\bf 31}, (Springer, Berlin,1995). 
\item
T. Heinosaari, M. Ziman, Acta Phys. Slovaca {\bf 58}, 487 (2008). 
\item C. W. Helstrom, {\em Quantum Detection and Estimation 
Theory} (Academic Press, New York, 1976)
\item
A.S. Holevo, {\em Statistical Structure of 
Quantum Theory}, Lect. Not. Phys {\bf 61}, (Springer, Berlin,
2001).
\item M. Ozawa, J.  Math. Phys. \textbf{25}, 79 (1984).
\item 
M. G. A. Paris, J. Rehacek (Eds.), {\em Quantum State Estimation} 
Lect. Notes Phys. {\bf 649}, (Springer, Berlin, 2004).
\item 
V. Gorini, A. Frigerio, M. Verri, A. Kossakowski, E. C. G. 
Sudarshan, Rep. Math. Phys. {\bf 13}, 149 (1978).
\item
F. Buscemi, G. M. D'Ariano, and M. F. Sacchi, Phys. Rev. A {\bf 68}.
042113 (2003).
\item
K. Banaszek, Phys. Rev. Lett. {\bf 86}, 1366 (2001). 
\end{enumerate}
}
%%%%
%\vfill
%%%%
\appendix
\section{Trace and partial trace}
\label{apTR}
The trace of an operator $O$ is a scalar quantity equal to sum of diagonal
elements in a given basis $\hbox{Tr}[O]=\sum_n
\langle\varphi_n|O|\varphi_n\rangle$. The trace is invariant under any
change of basis, as it is proved by the following chain of equa\-lities
\begin{align}\sum_n\langle\theta_n|O|\theta_n\rangle &=
\sum_{njk}\langle\theta_n|\varphi_k\rangle
\langle\varphi_k|O|\varphi_j\rangle
\langle\varphi_j|\theta_n\rangle=
\sum_{njk}
\langle\varphi_j|\theta_n\rangle
\langle\theta_n|\varphi_k\rangle
\langle\varphi_k|O|\varphi_j\rangle
\notag \\ 
& \notag =\sum_{jk}
\langle\varphi_j|\varphi_k\rangle
\langle\varphi_k|O|\varphi_j\rangle
=\sum_{k}
\langle\varphi_k|O|\varphi_k\rangle\,,\end{align} 
where we have suitably inserted and removed
resolutions of the identity in terms of both basis $\{|\theta_n\rangle\}$ 
and $\{|\varphi_n\rangle\}$. As a consequence, using the basis of
eigenvectors of $O$, $\hbox{Tr}[O]=\sum_n o_n$, $o_n$ being the
eigenvalues of $O$. Trace is a linear operation, i.e. 
$\hbox{Tr}[O_1+O_2]=\hbox{Tr}[O_1]+\hbox{Tr}[O_2]$ and
$\hbox{Tr}[\lambda\,O]=\lambda\hbox{Tr}[O]$ and thus
$\partial\hbox{Tr}[O]=\hbox{Tr}[\partial O]$ for any derivation.
The trace of any "ket-bra" $\hbox{Tr}[|\psi_1\rangle\langle\psi_2|]$ 
is obtained by "closing the sandwich" $\hbox{Tr}[|\psi_1
\rangle\langle\psi_2|]=\langle\psi_2|\psi_1\rangle$; in fact upon
expanding the two vectors in the same basis and taking the trace
in that basis 
$\hbox{Tr}[|\psi_1\rangle\langle\psi_2|]= \sum_{nkl}
\psi_{1k}\psi_{2l}^*
\langle\theta_n|\theta_k\rangle\langle\theta_l|\theta_n\rangle 
=\sum_{n}\psi_{1n}\psi_{2n}^*
=\langle\psi_2|\psi_1\rangle$.
Other properties are summarized by the 
following theorem.
\begin{theorem}
For the trace operation the following properties hold
\begin{itemize}
\item[i)] Given any pair of operators $\Tr[A_1A_2] =\Tr[A_2A_1]$
\item[ii)] Given any set of operators $A_1,...,A_\smn$ we
$\Tr[A_1A_2A_3...A_\smn] 
= \Tr[A_2A_3...A_\smn A_1]
= \Tr[A_3A_4...A_1A_2]=...$ (circularity). 
\end{itemize}
\end{theorem}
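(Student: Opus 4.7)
The plan is to first establish part (i) by a direct computation in a chosen basis, and then bootstrap to the general circularity statement (ii) by a simple grouping argument.

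For (i), I would pick any orthonormal basis $\{|\varphi_n\rangle\}$ of the Hilbert space and start from the definition of the trace, namely $\Tr[A_1 A_2] = \sum_n \langle \varphi_n|A_1 A_2|\varphi_n\rangle$. Then I would insert the resolution of the identity $\sum_k |\varphi_k\rangle\langle\varphi_k| = \id$ between $A_1$ and $A_2$ to obtain
\begin{align*}
\Tr[A_1 A_2] = \sum_{n,k} \langle\varphi_n|A_1|\varphi_k\rangle \langle\varphi_k|A_2|\varphi_n\rangle .
\end{align*}
Since the matrix elements are complex numbers, they commute, so I can reorder the factors and swap the order of summation to get
\begin{align*}
\Tr[A_1 A_2] = \sum_{k,n} \langle\varphi_k|A_2|\varphi_n\rangle \langle\varphi_n|A_1|\varphi_k\rangle = \sum_k \langle\varphi_k|A_2 A_1|\varphi_k\rangle = \Tr[A_2 A_1],
\end{align*}
where the last equality uses completeness one more time. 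Basis-invariance of the trace, already proved in the excerpt, guarantees this is well-defined.

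For (ii), I would reduce the general case to (i) by grouping. Define the single operator $B \equiv A_2 A_3 \cdots A_\smn$. Then $A_1 A_2 \cdots A_\smn = A_1 B$, and applying (i) to the pair $(A_1, B)$ yields
\begin{align*}
\Tr[A_1 A_2 \cdots A_\smn] = \Tr[A_1 B] = \Tr[B A_1] = \Tr[A_2 A_3 \cdots A_\smn A_1].
\end{align*}
Iterating the same argument with the cyclically shifted product gives in turn $\Tr[A_3 A_4 \cdots A_\smn A_1 A_2]$, and so on through all cyclic permutations, which is precisely the circularity statement.

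The main obstacle is essentially nonexistent here; this is a formal manipulation rather than a conceptual proof. The only point requiring a small amount of care is making the bookkeeping of indices transparent in (i) — specifically, being explicit that inserting the resolution of the identity is legitimate and that reordering scalar matrix elements is what unlocks the swap $A_1 A_2 \leftrightarrow A_2 A_1$. Once (i) is in hand, (ii) is immediate by the grouping trick and does not require an induction formalism, although one could phrase it as induction on $\smn$ if desired.
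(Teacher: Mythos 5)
Your proof is correct. The paper itself leaves this theorem as an exercise, so there is no proof to compare against; your argument --- inserting a resolution of the identity to reduce $\Tr[A_1A_2]=\Tr[A_2A_1]$ to a reordering of scalar matrix elements, then obtaining circularity by grouping $B=A_2\cdots A_\smn$ and applying part (i) --- is the standard one and matches the basis-insertion technique the paper uses elsewhere (e.g.\ to prove basis-invariance of the trace). Your grouping step also correctly yields only cyclic permutations, consistent with the paper's remark that $\Tr[A_1A_2A_3]\neq\Tr[A_2A_1A_3]$ in general.
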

\begin{proof}: left as an exercise.\qed\end{proof}
Notice that the "circularity" condition is essential to have 
property ii) i.e. 
$\hbox{Tr}[A_1A_2A_3] = \hbox{Tr}[A_2A_3A_1]$, but
$\hbox{Tr}[A_1A_2A_3] \neq \hbox{Tr}[A_2A_1A_3]$
\par
Partial traces $R_\smb\in L(H_\smb)$ $R_\sma\in L(H_\sma)$ 
of an operator $R$ in $L(H_1\otimes H_2)$ 
are defined accordingly as 
$$
R_\smb=\hbox{Tr}_\sma \left[R\,\right] = \sum_n
{}_\sma\langle\varphi_n|R\,|\varphi_n\rangle_\sma
\qquad
R_\sma=\hbox{Tr}_\smb \left[R\,\right] = \sum_n
{}_\smb\langle\varphi_n|R\,|\varphi_n\rangle_\smb\,$$
and circularity holds only for single-system operators, e.g., 
if $R_1, R_2 \in L (H_\sma\otimes H_\smb)$, $A\in L(H_\sma)$, $B\in
L(H_\smb)$
\begin{align}\notag
\hbox{Tr}_\sma \left[A\otimes \id \,R_1 R_2\right] &= \sum_n a_n \langle
a_n | R_1 R_2|a_n\rangle = \hbox{Tr}_\sma \left[R_1 R_2\,A\otimes \id \right] 
\\ \notag
\hbox{Tr}_\sma \left[A\otimes B\, R_1 R_2\right] &= 
\sum_n a_n \langle a_n | \id \otimes B\, R_1 R_2|a_n\rangle 
=
\hbox{Tr}_\sma \left[\id \otimes B\, R_1 R_2\, A\otimes \id\right]
\\ \notag
&\neq
\sum_n a_n \langle a_n | R_1 R_2\,\id\otimes B |a_n\rangle 
= \hbox{Tr}_\sma \left[R_1 R_2\,A\otimes B \right] 
\end{align}
%%%%%%
\begin{exercise} Consider a generic mixed state $\varrho\in
L(H\otimes H)$ and write the matrix elements of the two partial traces
in terms of the matrix elements of $\varrho$.
\end{exercise}
\begin{exercise} Prove that also partial trace is invariant under
change of basis.
\end{exercise}
%%%%%%%%%%%%%%%%%%
\section{Uncertainty relations}
\label{apUR}
Two non commuting observables $[X,Y]\neq 0$ do not admit a 
complete set of common eigenvectors, and thus it not 
possible to find common eigenprojectors
and to define a joint observable. Two non commuting observables are said to
be incompatible or complementary, since they cannot assume definite values
simultaneously. A striking consequence of this fact is that when 
we measure an observable $X$ the precision of the measurement, as 
quantified by the variance  $\langle \Delta X^2\rangle = 
\langle X^2\rangle - \langle X\rangle^2$, is influenced by the 
variance of any observable which is non commuting with
$X$ and cannot be made arbitrarily small. In order to determine the
relationship between the variances of two noncommuting 
observables, one of which is measured
on a given state $|\psi\rangle$, let us consider the two
vectors
$$
|\psi_1\rangle =(X-\langle X\rangle) |\psi\rangle \qquad
|\psi_2\rangle =(Y-\langle Y\rangle) |\psi\rangle\,,
$$
and write explicitly the Schwartz inequality
$\langle\psi_1|\psi_1\rangle\langle\psi_2|\psi_2\rangle \geq
\left|\langle\psi_1|\psi_2\rangle\right|^2$, i.e. \cite{Puri}
\begin{align}
\langle \Delta X^2\rangle 
\langle \Delta Y^2\rangle 
\geq \frac14 \left[ \left|\langle F\rangle \right|^2 + \left|\langle C
\rangle\right|^2\right] \geq \frac14  \left|\langle C
\rangle\right|^2\,, 
\label{UR}
\end{align}
where 
$[X,Y]=iC$ and $F=XY-YX-2\langle X\rangle \langle Y\rangle$.
Ineq. (\ref{UR}) represents the uncertainty relation for the non
commuting observables $X$ and $Y$ and it is 
usually presented in the form involving the second inequality.
Uncertainty relations set a lower bound to the measured 
variance in the measurement of a single observable, say $X$, 
on a state with a fixed, intrinsic, variance of the complementary
observable $Y$ (see Section \ref{jm} for the relationship between the
variance of two non commuting observables in a joint measurement). 
The uncertainty product is minimum when the two vectors
$|\psi_1\rangle$ and $|\psi_2\rangle$ are parallel in the Hilbert space, 
i.e. $|\psi_1\rangle = -i \lambda |\psi_2\rangle$ where $\lambda$ is a 
complex number. Minimum uncertainty states (MUS) for the pair of observables 
$X,Y$ are thus the states satisfying 
$$
\left(X + i \lambda Y \right)|\psi\rangle = \left( \langle X\rangle  + i \lambda
\langle Y \rangle \right) |\psi\rangle\,.
$$
If $\lambda$ is real then $\langle F\rangle=0$, i.e. the quantities $X$
and $Y$ are uncorrelated when the physical system is prepared in the 
state $|\psi\rangle$.  If $|\lambda|=1$ then 
$\langle \Delta X^2\rangle=\langle \Delta Y^2\rangle$ and the 
corresponding states are
referred to as equal variance MUS. Coherent states of a single-mode 
radiation field \cite{cah69} are equal variance MUS, e. g. for the pair of quadrature
operators defined by 
$Q=\frac{1}{\sqrt{2}} (a^\dag + a)$ and 
$P=\frac{i}{\sqrt{2}} (a^\dag - a)$. 
%%%%%%%%%%%%%%%%%%
\end{document}